\newtheorem{theorem}{Theorem}
\newtheorem{lemma}{Lemma}
\newtheorem{prop}{Proposition}
\newtheorem{definition}{Definition}
\newcommand{\ADD}[1]{\textcolor{black}{#1}}
\newenvironment{proof-sketch}{\noindent{\bf Sketch of Proof:}\hspace*{1em}}{\qed\bigskip}
\begin{document}
	
	\title{TIPS: Transaction Inclusion Protocol with Signaling  in DAG-based Blockchain}
	
	\author{
		Canhui Chen,
		Xu Chen,
		Zhixuan Fang
		\IEEEcompsocitemizethanks{

			\IEEEcompsocthanksitem Canhui Chen is with  Institute for Interdisciplinary Information Sciences (IIIS), Tsinghua University, Beijing, China. 
			E-mail: chen-ch21@mails.tsinghua.edu.cn.
			\IEEEcompsocthanksitem Xu Chen is with School of Computer Science and Engineering, Sun Yat-sen University, Guangzhou, China
			E-mail: chenxu35@mail.sysu.edu.cn.
			\IEEEcompsocthanksitem  Zhixuan Fang is with  Institute for Interdisciplinary Information Sciences (IIIS), Tsinghua University, Beijing, China, and  Shanghai Qi Zhi Institute, Shanghai, China. E-mail: zfang@mail.tsinghua.edu.cn.
		    \IEEEcompsocthanksitem 
			Corresponding author: Zhixuan Fang.
		}
	}
	
	
	
	
	\maketitle
	
	\begin{abstract}
		Directed Acyclic Graph (DAG) is a popular approach to achieve scalability of blockchain networks. Due to its high efficiency in data communication and great scalability, DAG has been widely adopted in many applications such as Internet of Things (IoT) and Decentralized Finance (DeFi). DAG-based blockchain, nevertheless, faces the key challenge of transaction inclusion collision due to the high concurrency and the network delay. Particularly, the transaction inclusion collision in DAG-based blockchain leads to the revenue and throughput dilemmas, which would greatly degrade the system performance. In this paper, we propose ``TIPS'', the Transaction Inclusion Protocol with Signaling, which broadcasts a signal indicating the transactions in the block. We show that with the prompt broadcast of a signal, TIPS substantially reduces the transaction collision and thus resolves these dilemmas. Moreover, we show that TIPS can defend against both the denial-of-service and the delay-of-service attacks. We also conduct intensive experiments to demonstrate the superior  performance of the proposed protocol.
	\end{abstract}
	
	\begin{IEEEkeywords}
		blockchain, game theory, performance analysis
	\end{IEEEkeywords}

	\section{Introduction}\label{sec:intro}
	
	\IEEEPARstart{B}{lockchain} plays an important role in many fields such as finance, supply chain, and IoT services, thanks to its advantages of decentralization, persistence, pseudonymity, and auditability \cite{swan2019blockchain,xiong2018cloud,kang2019toward}. However, the poor performance of (mostly) linear blockchains (e.g., Bitcoin, Ethereum) severely limits the possible large-scale applications in practice  \cite{zhou2020solutions}. 
	To address this challenge, Directed Acyclic Graph (DAG) \cite{lewenberg2015inclusive} is introduced as an alternate structure to address the issue of the scalability of blockchain. DAG-based blockchain allows multiple blocks or transactions to be appended concurrently without solving the forks, which achieves high efficiency in data communication and great scalability. Due to these properties, DAG-based blockchain has been widely adopted in many IoT scenarios \cite{wang2021understanding,cao2019internet,zhang2021dag,yang2020ldv,li2020direct}. 
	%
	DAG-based blockchain, nevertheless, faces the key challenge of transaction inclusion collision \cite{lewenberg2015inclusive}. 
	Due to the high concurrency and the network  delay, miners usually do not have complete information of the updated blockchain. Thus, miners often include the same transactions in concurrent blocks, generating redundant records in the blockchain. 
	The collision in transaction inclusion wastes the block capacity and severely degrades the system performance \cite{gupta2019cdag}. 

	Though DAG-based blockchain systems admit high concurrency in transaction process, the risk of transaction collision indeed induces dilemmas on miners' revenue and system throughput.
	The \emph{revenue dilemma} indicates the difficult situation in improving  miners' revenue, or, in achieving higher transaction fees per block.
	When miners are packing transactions into a block, the intuitive strategy is to select transactions with high rewards (fees).
	But concurrent blocks with the same set of high-fee transactions lead to the split of reward among miners, which discourages miners from including these transactions.
	Indeed, the theoretical analysis in \cite{lewenberg2015inclusive} has verified that miners would avoid transaction collision by choosing transactions with lower fees.
	Such a negative impact would also degenerate the profit efficiency and the quality of service of the system, as the high value transactions will not be processed with priority.
	
	On the system level, the \emph{throughput dilemma} also arises and implies the difficulty in scaling up the system. Since the network propagation delay is positively correlated with the block size in the current DAG-based blockchain, packing more transactions in a single block leads to a larger network propagation delay. 
	The increased network propagation time induces more collisions and thus wastes the system throughput. 
	In fact, a similar dilemma in increasing block size has been observed  in linear blockchain systems where network propagation delay leads to forking \cite{gobel2017increased}.
	
	To tackle these issues, in this paper, we propose the Transaction Inclusion Protocol with Signaling (TIPS), where we introduce a lightweight ``signal'' mechanism in the DAG-based blockchain.
	In TIPS, when a miner mines a new block, he will first broadcast a ``signal'' before broadcasting the block. The other miners will adjust their transaction inclusion strategies reacting to the ``signal''. 
	A signal has the following two properties:
	(i) The size of the signal is small enough for quick propagation in the network.
	(ii) The signal provides information about transactions included in the corresponding block. 
	
	
	Specifically, we use a Bloom filter \cite{bloom1970space}  to generate the ``signal'' in the block header. The Bloom filter can indicate whether a transaction is in the block. When a miner mines a new block, it should first broadcast the block header as a signal to the whole network. Since the size of the signal is small, other miners can receive the signal in a short time. 
	Based on the signal, other miners can infer which transactions are included in the new block with high probability. 
	Thus, other miners are able to avoid the transaction inclusion collision.
	
	The introduction of TIPS can break down the revenue dilemma and throughput dilemma. We show this by analyzing the miners' transaction inclusion strategy under TIPS. We model the transaction inclusion process as a non-cooperative game and investigate the equilibrium strategy.
	We show that with the signal information,  miners are able  to select high value transactions in equilibrium, which dissolves the revenue dilemma. 
	In the meantime, the signal in TIPS effectively reduces the network propagation time, which allows the implementation of a huge block size, and thus, breaks the throughput dilemma.
	These results are supported by both theoretical analysis and experimental validation.
	Beyond the discussion on performance,  we also investigate the security of TIPS and show that TIPS can defend against existing classic attacks.

	The key contributions of the paper are listed as follows:
	
	\begin{itemize}
		\item We characterize the revenue dilemma and the throughput dilemma of DAG-based blockchain systems. Specifically, miners are suffering from low revenue no matter the transaction inclusion strategy, while the system throughput does not increase in the block size. We show this is due to the transaction collisions in the DAG-based blockchain system, and that a low network propagation delay is the key to breaking the dilemmas.
		\item We propose a novel Transaction Inclusion Protocol with Signaling (TIPS) in the DAG-based blockchain. 
		TIPS includes a Bloom filter in the block header, which serves as a signal in the mining process and can indicate the transaction included in the newly-mined block. By separating the propagation/verification processes of the block header and block body, TIPS can broadcast the block header (signal) as soon as possible, which significantly lowers the effective network propagation delay.
		\item We provide a thorough theoretical analysis of the performance and security of TIPS. 
		We adopt a game-theoretic framework to show that at the Nash equilibrium, with the prompt signal from TIPS, the system effectively lowers the network propagation delay, and thus reduces transaction collisions.	
		Besides, we also develop a DAG-based blockchain simulator and conduct intensive experiments. Both the theoretical analysis and experiment results show that TIPS can substantially resolve both the revenue and the throughput dilemmas.
	\end{itemize}
	
	The rest of the paper is organized as follows. 
	In Section \ref{sec:dilemmas}, we introduce the miners' transaction inclusion game and characterize the two dilemmas, namely, the revenue dilemma and the throughput dilemma.
	In Section \ref{sec:system_model}, we introduce TIPS, the transaction inclusion protocol with signaling in the DAG-based blockchain. 	
	In Section \ref{sec:tips_break}, we show that TIPS can break down the aforementioned dilemmas.
	In Section \ref{sec:experiment}, we conduct intensive experiments to demonstrate the efficiency of TIPS. In Section \ref{sec:security}, we discuss the possible security threats in the TIPS. In Section \ref{sec:related}, we review related literature. 	Section \ref{sec:conclusion} concludes the paper with final remark.

	\section{Transaction Inclusion Game and the Dilemmas}\label{sec:dilemmas}
	
	In this section, we first investigate the system model and the transaction inclusion game. We next show that there are two dilemmas in the DAG-based blockchain, namely, the revenue dilemma and the throughput dilemma. From the miners' perspective, the revenue dilemma indicates the difficult situation in improving the miners' revenue. From the system perspective, the throughput dilemma reveals the difficulty in scaling up the system.
	\ADD{We list the key notations in Section \ref{sec:dilemmas} and Section \ref{sec:tips_break} in Table \ref{table:notation_table}.}
	
		\begin{table}[!t]
		\caption{Summary of Notations}	\label{table:notation_table}
		\begin{tabular}{@{}ll@{}}
			\toprule
			Notation                & Description          \\ \midrule
			$b$                  & Number of bits in the Bloom filter  \\
			$h$                    & Number of hash functions in the Bloom filter                  \\
			$\epsilon$                    & The probability of false positives of the Bloom filter \\
			$\eta$                    & The probability of rejecting a valid Bloom filter \\
			$X$               &  Number of bits that are set to 1 in the Bloom filter \\
			$n$                    & The maximum number of transactions in a block\\
			$m$              & Size of the transaction pool\\
			$\lambda$              & Block production rate            \\
			$\Delta$  & Propagation delay time \\ 
			$f_i$  & Transaction fee of the $i$-th transaction\\
			$\textbf{p}$ & Transaction inclusion strategy \\
			$p_i$ & Probability of including the transaction $i$ in the strategy \\
			\midrule                                    
		\end{tabular}
	\end{table}
	
	\subsection{Transaction Inclusion Game}\label{sec:game}
	
	In this paper, we consider a DAG-based blockchain, where the block generation process follows the Poisson process with a rate $\lambda$. 
	\ADD{Following the common assumption in previous literature (e.g., \cite{hari2019accel, lewenberg2015inclusive}),  we denote the maximum network propagation delay for a block as $\Delta$, which means all nodes in the blockchain network can receive the newly-mined block after $\Delta$.
		Moreover, similar to \cite{lewenberg2015inclusive,chen2020nonlinear}, we assume that miners in the system are homogeneous in mining power and share the identical transaction pool. 
		The transaction pool includes at most $m$ transactions. Due to the block size limit, each block can contain at most $n$ transactions.\footnote{Without loss of generality, we can always assume that there are $m$ transactions in the pool and each block will contain $n$ transactions. Practically, if the transactions are insufficient, we can assume that there are dummy transactions with zero transaction fees, which does not affect our analysis.}}
	To maximize the mining revenue, the miner will decide his transaction inclusion strategy and pick up some transactions from the transaction pool and pack them into the mining block to earn the transaction fees. 
	
	We denote a miner's transaction inclusion strategy as $\textbf{p} \in \mathbb{P}$.
	Here $\mathbb{P} =  \{ \textbf{p} | 0 \leq p_i \leq 1 \text{ and } \sum_{i=1}^{m} p_i = n \}$ denotes the set of transaction inclusion strategies, and $p_i$ denotes the probability of including the transaction $i$ in the new block.
	Without loss of generality, we sort the transactions in the transaction pool in descending order by their transaction fees, where the transaction fee of the transaction $i$ is denoted as $f_i$. Then we have $f_1 \geq f_2 \geq \dots \geq f_m$.
	As examples, we show below three typical transaction inclusion strategies:
	\begin{itemize}
		\item Random inclusion ($\textbf{p}^{\rm rand}$): $p_1 = p_2 = \dots = p_m = \frac{n}{m}$.
		\item Random inclusion with priority ($\textbf{p}^{\rm priority}$): $p_1 \geq p_2 \geq \dots \geq p_m$ and $\frac{p_1}{f_1} = \frac{p_2}{f_2} = \dots = \frac{p_m}{f_m}$.
		\item Top $n$ ($\textbf{p}^{\rm top}$): $p_1 = p_2 = \dots = p_n = 1$ and $p_{n+1} = p_{n+2} = \dots = p_m = 0$.
	\end{itemize}
	
	Following the similar setting in \cite{lewenberg2015inclusive}, we model the miners' transaction selection as a single-shot  transaction inclusion game \cite{chen2020nonlinear}, where each miner adopts his transaction inclusion strategy to maximize his revenue. 
	Since the coinbase transaction reward is independent of the miner's transaction selection, for simplicity, we only consider the transaction fee reward in the miner’s revenue.
	
	The miner's revenue is analyzed as follows. Consider a miner $A$ that finds and propagates a block, which includes a transaction $i$.  If $\iota$ more miners have successfully mined a block with the same transaction $i$ during the network propagation period of miner $A$'s block, we assume that this miner's expected reward from this transaction $i$ is $f_i/ (\iota+1)$, i.e., assuming an equal network advantage for all miners.
	Note that such a model of probabilistic and homogeneous network advantage during the propagation period is common in the previous study (e.g., \cite{lewenberg2015inclusive,eyal2014majority}).
	Then the miner's expected reward  is analyzed in the following lemma.
	\begin{lemma}\label{lemma:reward}
		The miner's revenue on one block with transaction inclusion strategy $\textbf{p}$ given that all the other miners adopt the strategy $\textbf{p}'$ is
		\begin{equation}\nonumber
			R(\textbf{p}|\textbf{p}') = \sum_{i=1}^{m} p_i f_i r(p_i'|\Delta),
		\end{equation}
		where 
		\begin{equation}\nonumber
			r(p_i' | \Delta) = \frac{\left(1 - e^{-\lambda \Delta p_i'}\right) }{\lambda \Delta p_i'}.
		\end{equation}
	\end{lemma}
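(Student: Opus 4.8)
The plan is to decompose the revenue transaction-by-transaction via linearity of expectation and then reduce the problem to evaluating a single expectation of the form $\mathbb{E}[1/(\iota+1)]$ over a Poisson count. First I would write $R(\textbf{p}|\textbf{p}')$ as a sum over transactions $i$ of the expected reward miner $A$ collects from transaction $i$. Miner $A$ earns from transaction $i$ only when he actually includes it, which occurs with probability $p_i$; conditioned on inclusion, by the model his reward is the fee $f_i$ shared among all the miners who mined $i$ within his propagation window, namely $f_i/(\iota+1)$. Thus each term collapses to $p_i f_i\,\mathbb{E}[1/(\iota+1)]$, and the cross-transaction coupling imposed by the constraint $\sum_i p_i = n$ is irrelevant to each individual term because linearity of expectation only requires the marginal inclusion probabilities.

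The second step is to identify the law of $\iota$. Competing blocks arrive as a Poisson process of rate $\lambda$, so over $A$'s propagation window of length $\Delta$ their count is Poisson with mean $\lambda\Delta$. Since every competing miner uses strategy $\textbf{p}'$ and hence independently includes transaction $i$ with marginal probability $p_i'$, thinning the arrival process by these independent Bernoulli marks shows that the number $\iota$ of competing blocks that actually contain $i$ is Poisson with mean $\lambda\Delta p_i'$. The within-block correlation across different transactions does not enter here, since for a fixed $i$ only the marginal $p_i'$ matters.

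The third and main computational step is the series evaluation. Writing $\mu = \lambda\Delta p_i'$, I would compute
\begin{equation}\nonumber
\mathbb{E}\!\left[\frac{1}{\iota+1}\right] = e^{-\mu}\sum_{k=0}^{\infty}\frac{\mu^k}{(k+1)!} = \frac{e^{-\mu}}{\mu}\sum_{j=1}^{\infty}\frac{\mu^{j}}{j!} = \frac{e^{-\mu}}{\mu}\left(e^{\mu}-1\right) = \frac{1-e^{-\mu}}{\mu},
\end{equation}
where the key move is re-indexing $j = k+1$ to recognize the exponential series. Substituting $\mu = \lambda\Delta p_i'$ gives exactly $r(p_i'|\Delta)$, and summing over $i$ produces the claimed formula. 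The only point requiring care is the degenerate case $p_i' = 0$, for which $\iota \equiv 0$ and the reward factor is trivially $1$; this is consistent because $r(p_i'|\Delta)\to 1$ as $p_i'\to 0$, so the closed form extends continuously and no term is ill-defined.
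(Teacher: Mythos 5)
Your proof is correct, and it reaches the same per-transaction decomposition $R(\textbf{p}|\textbf{p}') = \sum_i p_i f_i\,\mathbb{E}[1/(\iota+1)]$ as the paper, but the computational route differs in a worthwhile way. The paper conditions on the total number $k$ of competing blocks (Poisson with mean $\lambda\Delta$), evaluates the inner binomial expectation via the identity $\sum_{\iota=0}^{k}\binom{k}{\iota}p^{\iota}(1-p)^{k-\iota}\frac{1}{\iota+1} = \frac{1-(1-p)^{k+1}}{p(k+1)}$, and then resums the resulting double series over $k$. You instead invoke Poisson thinning up front: marking each competing block with an independent Bernoulli$(p_i')$ indicator for containing transaction $i$ shows $\iota \sim \mathrm{Poisson}(\lambda\Delta p_i')$ directly, after which a single one-line series gives $\mathbb{E}[1/(\iota+1)] = \frac{1-e^{-\mu}}{\mu}$ with $\mu = \lambda\Delta p_i'$. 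The two arguments rest on the identical independence assumption (the paper's Binomial$(k,p_i')$ conditional law \emph{is} the independent-marks assumption), so nothing is lost; in effect the paper's double summation re-derives the thinning theorem by hand. What your route buys is brevity and transparency — it makes visible that the effective rate of competitors for transaction $i$ is the thinned rate $\lambda p_i'$, which explains at a glance why $r$ has the form $\varphi(\lambda\Delta p_i')$ with $\varphi(x) = (1-e^{-x})/x$. You are also more careful than the paper on two minor points: you justify why the cross-transaction coupling from the constraint $\sum_i p_i = n$ is harmless (linearity needs only marginals), and you handle the degenerate case $p_i' = 0$, where the paper's expression involves a division by $p_i$ that is only defined by continuous extension.
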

	The proof of Lemma \ref{lemma:reward} is shown in Appendix \ref{proof:lemma:reward}. Besides, we postpone all proofs of this paper into the Appendix. 
	The term $f_i \cdot r(p_i' | \Delta)$  in Lemma \ref{lemma:reward} reflects the expected reward for a miner to include transaction $i$ when all other miners include transaction $i$ in their blocks with probability $p_i'$, given network propagation delay being $\Delta$.
	%
	Specially, when all miners adopt the same transaction inclusion strategy $\textbf{p}$ (i.e., the symmetric case) and there is no futher confusion, we denote the miner's revenue as
	$R(\textbf{p}) = R(\textbf{p}|\textbf{p})$ for simplicity.
	
	To study the stable transaction inclusion behavior of miners, we first define the  Nash equilibrium and approximate Nash equilibrium of the transaction inclusion game below.
	
	\begin{definition}\label{def:NE}
		The transaction inclusion strategy $\textbf{p}^*$ is a $\mathbf{\xi}$-\textbf{approximate  Nash equilibrium} of the transaction inclusion game if 
		\begin{equation}\nonumber
			R(\textbf{p}^*) \geq \max_{\textbf{p} \in \mathbb{P}} R(\textbf{p} | \textbf{p}^*) - \xi .
		\end{equation}
		Specially, when $\xi = 0$, i.e., $R(\textbf{p}^*) \geq \max_{\textbf{p} \in \mathbb{P}} R(\textbf{p} | \textbf{p}^*)$, the transaction inclusion strategy $\textbf{p}^*$ is a \textbf{Nash equilibrium}.
	\end{definition}
	
	The symmetric equilibrium in the single-shot transaction inclusion game has been analyzed in \cite{lewenberg2015inclusive}, which is shown in the following theorem.
	
	\begin{theorem}\label{th:NE_general}
		(\cite{lewenberg2015inclusive}) With the network propagation delay for the whole block as $\Delta$, the symmetric equilibrium strategy of the transaction inclusion game is $p^*(\Delta)$, where
		\begin{equation}\nonumber
			\begin{aligned}
				p_i^*(\Delta) = \begin{cases}
					\min \{r^{-1}\left( \frac{c_{l^*} }{f_i} | \Delta \right), 1  \}, & 1 \leq i \leq l^*,\\
					0, & l^* < i \leq m.
				\end{cases} \\
			\end{aligned}
		\end{equation}
		Also, we have
		\begin{equation}\nonumber
			\begin{aligned}
				&F_l(c) = \sum_{i=1}^{l} \min \{ r^{-1}(c/f_i | \Delta),1 \} - n, \quad \forall 1 \leq l \leq m, \\
				&l^* = \max \{ l \leq m | \forall i \leq l: F_i\left( f_i \right) \leq 0  \}, \\
				&c_{l^*} \text{ is the root of } F_{l^*}.
			\end{aligned}
		\end{equation}
	\end{theorem}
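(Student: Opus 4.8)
The plan is to exploit the fact that, by Lemma~\ref{lemma:reward}, the payoff $R(\textbf{p}\,|\,\textbf{p}')=\sum_{i=1}^m p_i\,f_i\,r(p_i'\,|\,\Delta)$ is \emph{linear} in the deviating player's strategy $\textbf{p}$. A symmetric profile $\textbf{p}^*$ is an equilibrium exactly when it maximizes the linear functional $\textbf{p}\mapsto\sum_i p_i\,a_i$ over the polytope $\mathbb{P}=\{\textbf{p}:0\le p_i\le1,\ \sum_i p_i=n\}$, where the coefficients $a_i:=f_i\,r(p_i^*\,|\,\Delta)$ are frozen at the equilibrium. First I would record the elementary facts about $r$: writing $x=\lambda\Delta p$ and $r=\frac{1-e^{-x}}{x}$, one checks that $r$ is continuous and strictly decreasing on $[0,\infty)$ with $r(0\,|\,\Delta)=1$ and $r(p\,|\,\Delta)\to0$, so the inverse $r^{-1}(\cdot\,|\,\Delta)$ is a well-defined decreasing map on $(0,1]$. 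These facts are what turn the fixed-point condition into the explicit water-filling form.

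Next I would solve the linear program. Maximizing $\sum_i p_i a_i$ over $\mathbb{P}$ is a fractional knapsack with unit weights and budget $n$: there is a threshold $c$ such that any optimum has $p_i=1$ when $a_i>c$, $p_i=0$ when $a_i<c$, and $p_i$ free in $[0,1]$ when $a_i=c$. Imposing self-consistency $a_i=f_i\,r(p_i^*\,|\,\Delta)$ then pins down each coordinate: for an interior coordinate $0<p_i^*<1$ the condition $a_i=c$ reads $r(p_i^*\,|\,\Delta)=c/f_i$, hence $p_i^*=r^{-1}(c/f_i\,|\,\Delta)$; for $p_i^*=1$ it gives $f_i\,r(1\,|\,\Delta)\ge c$; and for $p_i^*=0$, since $r(0\,|\,\Delta)=1$, it gives $a_i=f_i\le c$. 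All three cases are summarized by $p_i^*=\min\{r^{-1}(c/f_i\,|\,\Delta),1\}$ under the convention $r^{-1}(x\,|\,\Delta)=0$ for $x\ge1$. Because $f_1\ge\cdots\ge f_m$ and $r^{-1}$ is decreasing, $p_i^*>0\iff f_i>c$, so the support is a prefix $\{1,\dots,l^*\}$; this is where the labeling $l^*$ and the constant $c_{l^*}=c$ enter.

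Finally I would determine $l^*$ and $c_{l^*}$ from the budget constraint and a monotonicity argument. Substituting the prefix form into $\sum_i p_i^*=n$ gives $F_{l^*}(c_{l^*})=0$, i.e. $c_{l^*}$ is the root of $F_{l^*}$. For the labeling to be self-consistent I need the root to sit between consecutive fees, $f_{l^*+1}\le c_{l^*}<f_{l^*}$. The key observation is that each $F_l(\cdot)$ is continuous and decreasing while $F_l(c)$ is nondecreasing in $l$ (adding a term only raises the sum), so its root $c_l$ is nondecreasing in $l$, whereas $f_l$ is nonincreasing; hence $c_l\le f_l\iff F_l(f_l)\le0$, and the predicate $F_l(f_l)\le0$ holds on an initial segment of indices. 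Taking the last index at which it holds yields exactly $l^*=\max\{l:\forall i\le l,\ F_i(f_i)\le0\}$, and $l\!=\!n$ always satisfies it (each summand is $\le1$, so $F_n(f_n)\le0$), which guarantees $l^*\ge n$ and the existence of the root via the intermediate value theorem. I expect the main obstacle to be this last step: verifying the boundary inequalities $f_{l^*+1}\le c_{l^*}<f_{l^*}$ rigorously, so that the support of the constructed profile is genuinely $\{1,\dots,l^*\}$, and handling the degenerate ties where several fees coincide or the root lands exactly on some $f_l$. Once these are settled, substituting $\textbf{p}^*$ back confirms that it maximizes $R(\cdot\,|\,\textbf{p}^*)$, establishing the claimed symmetric equilibrium.
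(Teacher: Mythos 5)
The paper never proves Theorem~\ref{th:NE_general} at all: it is imported verbatim from \cite{lewenberg2015inclusive}, so there is no in-paper argument to compare against, and your proposal is in effect supplying the missing proof. What you wrote is sound and follows the natural (and, in spirit, the original reference's) route: by Lemma~\ref{lemma:reward} the deviator's payoff is linear in his own strategy, so a symmetric equilibrium is a fixed point of a fractional-knapsack best response; the threshold/complementary-slackness conditions yield $p_i^*=\min\{r^{-1}(c/f_i\,|\,\Delta),1\}$ with prefix support, and the budget $\sum_i p_i^*=n$ makes $c_{l^*}$ the root of $F_{l^*}$. Two small repairs. First, $r(p\,|\,\Delta)\to 0$ only as $p\to\infty$; on the feasible range $p\in[0,1]$ the image of $r$ is $[r(1\,|\,\Delta),1]$, so $r^{-1}(c/f_i\,|\,\Delta)$ can exceed $1$ and the clamp $\min\{\cdot,1\}$ is doing real work there --- your convention should cover $x\le r(1\,|\,\Delta)$ (value $\ge 1$) as well as $x\ge 1$ (value $0$). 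Second, the boundary inequalities you flag as the main obstacle close in one line: since $\min\{r^{-1}(f_{l^*+1}/f_{l^*+1}\,|\,\Delta),1\}=0$, maximality of $l^*$ gives $0<F_{l^*+1}(f_{l^*+1})=F_{l^*}(f_{l^*+1})$, and since $F_{l^*}$ is nonincreasing with $F_{l^*}(c_{l^*})=0$ this forces $c_{l^*}>f_{l^*+1}$; combined with $F_{l^*}(f_{l^*})\le 0$, which gives $c_{l^*}\le f_{l^*}$, the support is exactly $\{1,\ldots,l^*\}$ (the tie case $c_{l^*}=f_{l^*}$, where $p_{l^*}^*=0$, is harmless since the stated formula already returns $0$ there). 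Existence of the root follows from your own observation $l^*\ge n$: each summand tends to $1$ as $c\to 0^+$, so $F_{l^*}(0^+)=l^*-n\ge 0$, and the intermediate value theorem applies on $(0,f_{l^*}]$.
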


	\subsection{Revenue Dilemma Analysis}\label{sec:profit}
	As discussed in Section \ref{sec:intro}, to achieve a high revenue, miners are supposed to include transactions with high fees, i.e.,  to adopt the ``top $n$'' strategy. However, as DAG-based blockchain allows high concurrency of block generation, every miner choosing high-fee transactions  will result in severe transaction inclusion collision, which will degrade miners' revenue. This is the revenue dilemma due to the collision of included transactions.

	The equilibrium strategy in Theorem \ref{th:NE_general} is highly related to the network propagation delay and is indeed a compromised solution facing the revenue dilemma. We can find that the equilibrium transaction inclusion strategy can be considered as the ``soft top $n$ strategy'', since only the transaction with top $l^*$ transaction fees might be included in a new block.
	A large network propagation delay $\Delta$  in Theorem \ref{th:NE_general} will result in a large $l^*$. 
	Besides, when the $l^*$ is large, for example, $l^* = m$, the ``soft top $n$ strategy'' performs like the random strategy $\textbf{p}_{\rm rand}$. 
	Below, we show that the random strategy is a $\xi$-approximate Nash equilibrium strategy, and $\xi$ represents the gap between the random strategy and the equilibrium strategy.
	\begin{theorem}\label{th:approx_rand}
		The random strategy, i.e., $\textbf{p}^{\rm rand}$ is a $\xi$-approximate Nash equilibrium, where
		\begin{equation}\nonumber
			\xi = n \frac{1 - e^{-\lambda \Delta \frac{n}{m}}}{\lambda \Delta \frac{n}{m}} \left( \frac{1}{n} \sum_{i=1}^{n} f_i - \frac{1}{m} \sum_{i=1}^{m} f_i   \right).
		\end{equation}
		Specially, when $\Delta \rightarrow \infty$, the random strategy is the Nash equilibrium.
	\end{theorem}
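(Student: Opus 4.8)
The plan is to read off the two quantities appearing in Definition \ref{def:NE} with the opponents' profile fixed at $\textbf{p}' = \textbf{p}^{\rm rand}$, and to show their difference equals exactly the stated $\xi$. The decisive observation is that under $\textbf{p}^{\rm rand}$ every coordinate satisfies $p_i' = n/m$, so the per-transaction factor $r(p_i'|\Delta)$ from Lemma \ref{lemma:reward} reduces to the single constant $r(\tfrac{n}{m}|\Delta) = \frac{1 - e^{-\lambda\Delta n/m}}{\lambda\Delta n/m}$, which no longer depends on $i$. Consequently the best-response objective collapses to a linear functional,
\begin{equation}\nonumber
R(\textbf{p} \mid \textbf{p}^{\rm rand}) = r\!\left(\tfrac{n}{m}\Big|\Delta\right) \sum_{i=1}^{m} p_i f_i ,
\end{equation}
so that maximizing over $\mathbb{P}$ is just a linear program on the polytope $\{0 \le p_i \le 1,\ \sum_i p_i = n\}$.

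First I would solve that linear maximization. Since the constant $r(\tfrac{n}{m}|\Delta) > 0$ factors out, we only maximize $\sum_i p_i f_i$, and because the fees are sorted $f_1 \ge \dots \ge f_m$, a standard rearrangement (fractional-knapsack) argument puts all available mass on the largest coefficients, giving the ``top $n$'' vertex $p_1 = \dots = p_n = 1$, $p_{n+1} = \dots = p_m = 0$ as an optimizer. Hence $\max_{\textbf{p}} R(\textbf{p}\mid\textbf{p}^{\rm rand}) = r(\tfrac{n}{m}|\Delta)\sum_{i=1}^{n} f_i$. Next I would evaluate the symmetric payoff $R(\textbf{p}^{\rm rand}) = R(\textbf{p}^{\rm rand}\mid\textbf{p}^{\rm rand}) = r(\tfrac{n}{m}|\Delta)\,\tfrac{n}{m}\sum_{i=1}^{m} f_i$ by substituting $p_i = n/m$ on both sides. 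Subtracting and pulling out a factor of $n$ rewrites the deviation gain as $n\, r(\tfrac{n}{m}|\Delta)\big(\tfrac1n\sum_{i=1}^{n} f_i - \tfrac1m\sum_{i=1}^{m} f_i\big)$, which is precisely $\xi$; this establishes $R(\textbf{p}^{\rm rand}) \ge \max_{\textbf{p}} R(\textbf{p}\mid\textbf{p}^{\rm rand}) - \xi$ with equality, so $\textbf{p}^{\rm rand}$ is a (tight) $\xi$-approximate Nash equilibrium.

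For the limiting claim, I would simply note that the numerator of $r(\tfrac{n}{m}|\Delta)$ is bounded in $[0,1]$ while its denominator $\lambda\Delta n/m \to \infty$, so $r(\tfrac{n}{m}|\Delta)\to 0$ and therefore $\xi \to 0$ as $\Delta\to\infty$; the best achievable deviation gain vanishes, making $\textbf{p}^{\rm rand}$ an exact Nash equilibrium in the limit. Honestly there is no serious obstacle here: the only nonroutine step is justifying that the ``top $n$'' vertex solves the linear program, and even that is immediate once one recognizes the objective is linear with sorted coefficients. The rest is bookkeeping to match the algebra to the closed form of $\xi$, so the main care required is keeping the constant $r(\tfrac{n}{m}|\Delta)$ factored cleanly throughout the subtraction.
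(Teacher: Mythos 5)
Your proposal is correct and follows essentially the same route as the paper's own proof: both fix the opponents at $\textbf{p}^{\rm rand}$ so that $r(p_i'|\Delta)$ collapses to the constant $\frac{1 - e^{-\lambda \Delta n/m}}{\lambda \Delta n/m}$, identify the best response as the top-$n$ strategy via the resulting linear objective, compute the deviation gain to be exactly $\xi$, and conclude the limiting Nash claim from the monotone decay of that constant to $0$ as $\Delta \rightarrow \infty$. Your explicit linear-programming justification of the top-$n$ maximizer is a minor elaboration of a step the paper treats as immediate, not a different method.
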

	
	\ADD{For convenience, we denote $y(\Delta) = \frac{1 - e^{-\lambda \Delta \frac{n}{m}}}{\lambda \Delta \frac{n}{m}}$. Note that $y(\Delta)$ is monotonically decreasing in $\Delta$, which implies that the extra revenue from deviating from the random strategy will decrease as the network propagation delay becomes larger. Thus the equilibrium transaction inclusion strategy will lean towards the random strategy. }
	Since broadcasting the whole block in the blockchain network takes a long time, the equilibrium strategy in the practical scenario is similar to the random strategy.
	However, the random strategy ignores the difference between different levels of the transaction fees, resulting in the dilemma where miners can not include transactions with high transaction fees. This is the revenue dilemma in DAG-based blockchain.

	\subsection{Throughput Dilemma Analysis}\label{sec:throughput}
	
	In this section, we study the throughput dilemma. The system throughput is measured by transaction per second (TPS). Intuitively, to achieve high throughput, the block size should be as large as possible so as to include more transactions in a block. However, a large block size will lead to a large network propagation delay and result in lots of transaction inclusion collisions, which will limit the system throughput. This is the throughput dilemma in DAG-based blockchain.

	Here, we first analyze the block capacity utilization and system throughput, and further characterize the throughput dilemma in DAG-based blockchain
	We define the block capacity utilization as the ratio of the number of unique transactions included in the blocks and the number of total transactions included in the blocks in long term. The system throughput is measured by transaction per second (TPS).
	Then the system throughput of the DAG-based blockchain is analyzed in the following theorem.
	
	\begin{theorem}\label{th_tps}
		The block capacity utilization and the throughput of the DAG-based blockchain with the transaction inclusion strategy $\textbf{p}$ and the network propagation delay $\Delta$ are
		\begin{equation}\nonumber
			\begin{aligned}
				&\text{U}(\textbf{p}) = \frac{   m - \sum_{i=1}^{m} (1 - p_i) e^{-\lambda \Delta p_i}  }{n(\lambda \Delta + 1)}, \quad \text{TPS}(\textbf{p}) = \lambda n \text{U}(\textbf{p}),
			\end{aligned}
		\end{equation}
		respectively.
	\end{theorem}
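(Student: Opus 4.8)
The plan is to obtain both $\text{U}(\mathbf p)$ and $\text{TPS}(\mathbf p)$ from a single renewal-reward argument over ``rounds'' of block production, reading off the utilization as a long-run ratio and the throughput as a long-run rate. Since blocks arrive as a Poisson process of rate $\lambda$ and each block independently includes transaction $i$ with probability $p_i$, the natural regeneration structure is: a round begins when a fresh block (the \emph{leader}) is mined after the previous round has fully propagated; during the ensuing propagation window of length $\Delta$ every newly mined block is \emph{concurrent} with the leader and may therefore collide with it; once this window closes the chain is synchronized and we wait for the next leader. I would first note that this decomposition assigns every block to exactly one round (a block is either a leader or lies in exactly one non-overlapping window), and that by the memorylessness of the Poisson process the rounds are i.i.d.\ regeneration cycles, so the renewal-reward theorem lets me replace long-run ratios/rates by per-round expectations.

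Next I would compute the ``denominator'' quantities. The idle wait for the leader has mean $1/\lambda$ and the window has fixed length $\Delta$, so the expected round length is $E[L]=\Delta+\tfrac1\lambda=\tfrac{\lambda\Delta+1}{\lambda}$. The number of blocks in a round is $1$ (the leader) plus the window blocks, the latter being $\mathrm{Poisson}(\lambda\Delta)$; hence the expected number of blocks per round is $1+\lambda\Delta$, and since each block contains $n$ transactions in expectation (as $\sum_i p_i=n$) the expected total number of inclusions per round is $n(1+\lambda\Delta)$. As a consistency check this already yields the correct aggregate inclusion rate $n(1+\lambda\Delta)/E[L]=\lambda n$.

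I would then compute the expected number of \emph{unique} transactions per round, i.e.\ the expected count of distinct transactions appearing in at least one block of the round. Fix $i$ and let $K\sim\mathrm{Poisson}(\lambda\Delta)$ be the number of window blocks. The leader omits $i$ with probability $1-p_i$, and conditioning on $K$ all window blocks omit $i$ with probability $(1-p_i)^K$; evaluating the Poisson probability generating function, $E[(1-p_i)^K]=e^{-\lambda\Delta p_i}$ (set $z=1-p_i$ in $E[z^K]=e^{\lambda\Delta(z-1)}$), the probability that $i$ never appears in the round is $(1-p_i)e^{-\lambda\Delta p_i}$. Thus $i$ contributes to the unique count with probability $1-(1-p_i)e^{-\lambda\Delta p_i}$, and summing over $i$ gives the expected unique inclusions per round $\sum_{i=1}^m\big[1-(1-p_i)e^{-\lambda\Delta p_i}\big]=m-\sum_{i=1}^m(1-p_i)e^{-\lambda\Delta p_i}$. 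The asymmetry between the \emph{guaranteed} leader (factor $1-p_i$) and the \emph{random} window (factor $e^{-\lambda\Delta p_i}$) is precisely what produces the $(1-p_i)$ term in the statement, so this bookkeeping is the heart of the argument.

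Finally I would assemble the result. By renewal-reward the utilization is the ratio of per-round expected unique to total inclusions, $\text{U}(\mathbf p)=\frac{m-\sum_i(1-p_i)e^{-\lambda\Delta p_i}}{n(\lambda\Delta+1)}$, and the throughput is the long-run rate of unique inclusions, $\text{TPS}(\mathbf p)=\frac{E[\text{unique per round}]}{E[L]}=\frac{\lambda\big[m-\sum_i(1-p_i)e^{-\lambda\Delta p_i}\big]}{\lambda\Delta+1}$, which is immediately seen to equal $\lambda n\,\text{U}(\mathbf p)$. I expect the main obstacle to be conceptual rather than computational: pinning down the round/collision model so that every block is counted exactly once, the collision window of length $\Delta$ is faithfully represented, and the i.i.d.\ regeneration hypotheses genuinely hold, so that the per-round ratio is provably the long-run ratio. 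Once that model is fixed, the probabilistic content — Poisson thinning and the generating-function evaluation — is routine.
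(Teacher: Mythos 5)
Your proposal is correct and takes essentially the same route as the paper: the paper likewise conditions on a $\mathrm{Poisson}(\lambda\Delta)$ number of concurrent blocks, uses the omission probability $(1-p_i)^{k+1}$ whose Poisson average yields $(1-p_i)e^{-\lambda\Delta p_i}$, and forms the ratio of expected unique inclusions $m-\sum_{i=1}^{m}(1-p_i)e^{-\lambda\Delta p_i}$ to expected total slots $n(\lambda\Delta+1)$ per propagation round. Your explicit renewal-reward scaffolding and the derivation of $\mathrm{TPS}$ as unique inclusions per round divided by the round length $(\lambda\Delta+1)/\lambda$ merely formalize what the paper asserts directly as $\mathrm{TPS}(\textbf{p})=\lambda n\,\mathrm{U}(\textbf{p})$.
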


	From Theorem \ref{th_tps}, we can find that to achieve a higher system throughput, we should enlarge the block size so as to include more transactions. However, if the block size is increased, the corresponding network propagation delay $\Delta$ will also increase, leading to a lower block capacity utilization, which will further degrade the system throughput. On the other hand, though decreasing the block size can reduce the network propagation delay and thus increase the block capacity utilization, there is only a small amount of transactions can be included in one block, which limits the system throughput. This is the throughput dilemma in DAG-based blockchain.

	\section{Transaction Inclusion Protocol with Signaling}\label{sec:system_model}
	
	
	To tackle the dilemmas in DAG-based blockchain, we introduce ``TIPS'', i.e., the Transaction Inclusion Protocol with Signaling. 
	The key features of TIPS are (1) TIPS introduces a signal to indicate the transactions included in the block. (2) TIPS broadcast the signal earlier than the whole block.
	
	As a baseline, we will compare TIPS with the standard DAG-based blockchain protocol (i.e., without TIPS), where the miners do not obtain any information of the newly-generated block until they receive the whole block.

	\subsection{Bloom Filter in Block Header}\label{sec:bloom}
	
	A Bloom filter \cite{bloom1970space} is a probabilistic data structure that answers the query of ``whether an element is a member of a set''.
	It returns a binary answer, i.e., either ``True'' or ``False''. 
	If the Bloom filter returns ``False'', the element is definitely not in the set. 
	However, if the Bloom filter returns ``True'', the element is a member of the set with high probability.

	As the most important feature, TIPS introduces a Bloom filter in the header of each block to maintain the information of transactions in this block. 
	Thus, the Bloom filter can answer the query of whether a transaction is in the corresponding block quickly and with high probability.

	Next, we examine the performance of a Bloom filter. The key metric is the false positive rate of a Bloom filter, i.e., the probability that the Bloom filter returns ``True'' but the element is not a member of the set. 
	Consider a Bloom filter with $b$ bits and $h$ different hash functions. 
	We assume that the block size limit is $n$ transactions per block. Thus, we can insert at most $n$ transactions into the Bloom filter associated with the block. The probability of false positives of the Bloom filter with $n$ transactions is 
	\begin{equation}\label{eq:bloom_false}
		\epsilon = \left( 1 - e^{-hn / b} \right)^h.
	\end{equation}
	
	It is worth mentioning that the representation of one transaction in the Bloom filter always takes a fixed space (i.e., $b$ bits) regardless of the size of the transactions.
	The size of a Bloom filter is drastically smaller than the size of the original block body, due to the property of hash functions. 
	In the current Bitcoin network, the average block size is 1.2MB, the average transaction size is 500B, and the average number of transactions in a block is 2500. As an example, for a Bloom filter with $b=20000$ bits and $h=5$ hash functions, the false positive probability is $2.17\%$, which is low enough for practice. Meanwhile, the total size of the signal is only 2.5KB, which is less than $1\%$ of the original block size and is small enough to propagate through the whole network in a short time.

	\subsection{Header-First Block Propagation}\label{sec:system_mode:block_propagation}
	
	To broadcast the signal to the whole network as soon as possible, we improve the block propagation model. \ADD{Figure \ref{fig:propagation_signal_cmp} shows the block propagation model in TIPS and the standard protocol. We can find that in the standard protocol, the node will start the propagation of a block (including the block header and the block body) only after it receives and verifies the whole block. On the contrary, TIPS decouples the propagation/verification processes of the block header and the block body. 
	In TIPS, the node will broadcast the received header once the verification is complete, without waiting to receive the whole block body. Therefore, TIPS  can broadcast the block header to the whole network in a short time.}

	\begin{figure}[!th]
		\centering
		\subfigure[Block propagation in TIPS]{
			\includegraphics[width=0.9\linewidth]{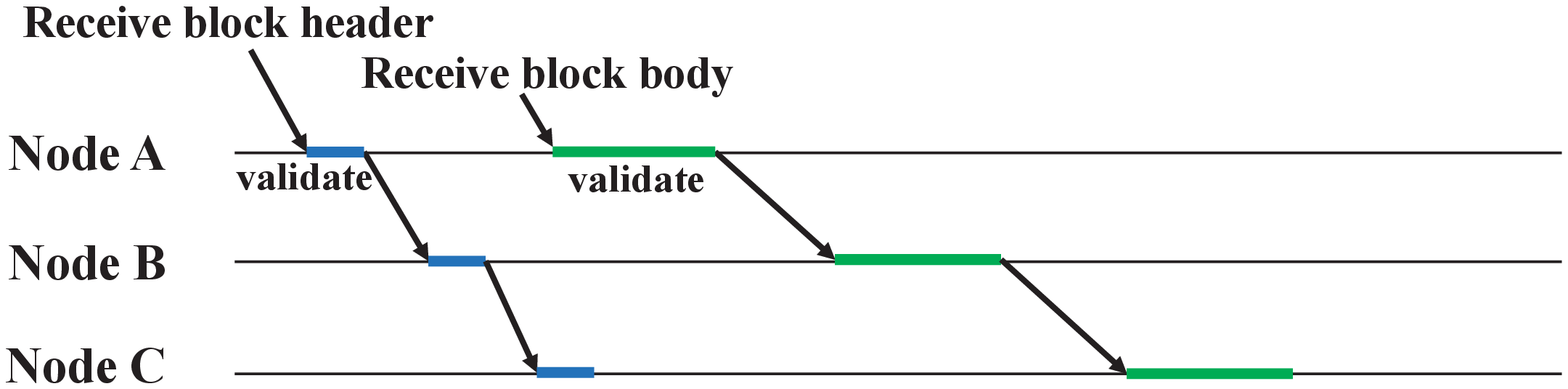}
		}
		\subfigure[Block Propagation in standard protocol]{
			\includegraphics[width=0.9\linewidth]{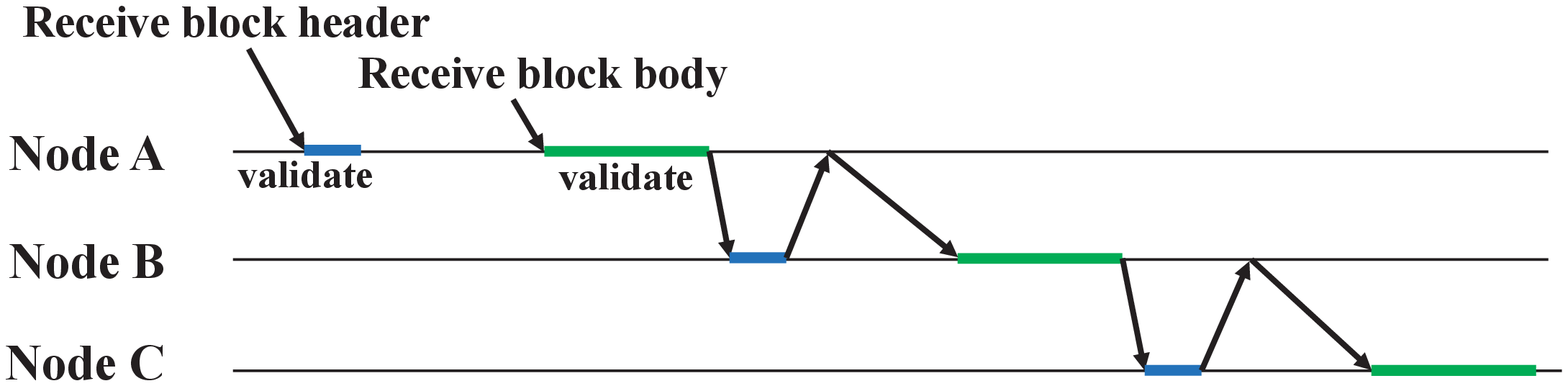}
		}
		
		\caption{\ADD{Block propagation model in TIPS and the standard protocol}}
		\label{fig:propagation_signal_cmp}
	\end{figure}
	

	{We explain the detailed block propagation model of TIPS as follows.}
	For a node $A$, upon receiving a message, there are two possible cases:
	\begin{enumerate}
		\item 	When node $A$ \textbf{receives a new block header $BH_c$} of block $B_c$, it validates $BH_c$ and checks whether the hash value of the block header satisfies the PoW puzzle.
		If the block header $BH_c$ is valid, node $A$ should broadcast $BH_c$ as soon as possible. Note that the validation process for the block header is not time-consuming and the size of the block header is small. Therefore, the block header $BH_c$ can be propagated to the whole network in a short time. 
		\item 	When node $A$ \textbf{receives a new block body $BB_c$} of block $B_c$, it validates $BB_c$ as follows:
		\begin{itemize}
			\item Block Header Existence: If the miner did not receive the corresponding block header $BH_c$ before, he should reject the block body immediately since he can not validate the PoW of the block.
			\item Bloom filter Validation: If the Bloom filter in the block header does not match the transactions in the block body, the block will be marked as ``invalid'' and be rejected.
		\end{itemize}
		If the block body is valid, node $A$ should broadcast $BB_c$ as soon as possible.
	\end{enumerate}

	\ADD{
	Besides, TIPS also helps to optimize the block transmission process from the following perspectives. 
	(i) In TIPS, the received signal can indicate the transactions in the newly-mined but unreceived block. Therefore, TIPS allows the miners to pre-verify the transactions that the signal indicates when receiving the signal. In this way, the miners can speed up the verification process when receiving the whole block, and therefore speed up the block transmission process. (ii) TIPS separates the verification/propagation processes of the block header and the block body. Since the block header is an identification of the whole block, the miners that received the block header without the block body can try to pull the block body from the connected neighbors instead of just passively waiting for the block propagation, which can also speed up the block transmission process. 
	(iii) The independent header-first propagation in TIPS also slightly accelerates the broadcast of the block body, as the block body broadcast does not need to wait for the neighbor's confirmation of the received header.
	}

	\subsection{Mining Process in TIPS}
	We denote the set of transactions hitting the Bloom filter of block $B_c$, i.e., transactions that are reported to be included in the block by the Bloom filter, 
	as $TH_c$. 
	We denote the set of transactions that are actually included in the block body of block $B_c$ as $TB_c$. Since all included transactions will be confirmed by the Bloom fitler, we have that $TB_c \subset TH_c$. Specifically, we have that $\mathbb{E}(|TB_c|)/\mathbb{E}(|TH_c|) = 1 - \epsilon$, where $\epsilon$ is the false positive probability of the Bloom filter in (\ref{eq:bloom_false}).

	\begin{figure}[!ht]
		\includegraphics[width=1.0\linewidth,height=0.55\linewidth]{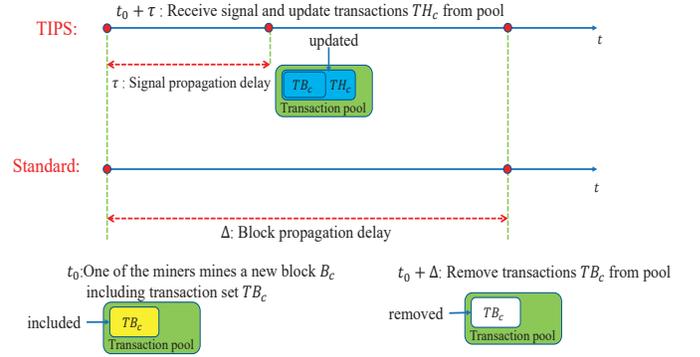}
		\caption{Mining process in TIPS}
		\label{fig:model}
	\end{figure}

	%
	In this paper, we consider the following mining reward rule, which is common in practices (e.g., \cite{eyal2014majority}). 
	If there are $k$ miners that include the same transaction $i$ with fee $f_i$ in their blocks, the transaction fee will be rewarded to the miner who first mines the block.
	Note that our analysis technique applies to more complicated rules.
	
	The mining process is shown in Algorithm \ref{alg:mining_dag}, visualized in Figure \ref{fig:model}. 
	We use $\Delta$ to denote the propagation time of a block, and $\tau$ to denote the propagation time of a block header (in which the signal is included). 
	Since the size of Bloom filter is much smaller than a block body, we have $\tau\ll \Delta$.
	$BL_c$ is the Bloom filter in $BH_c$, $TH_c$ is the set of transactions hitting $BL_c$ in transaction pool, and $TB_c$ is the set of transactions included in $BB_c$.
	

	In the following, we specify the mining process, especially the transaction selection.
	In general, a miner maintains an expected value for each transaction in his transaction pool, and selects transactions for mining according to some specific transaction inclusion strategies. 
	In TIPS, $f_i$ denotes the expected value of transaction $i$ instead of its original transaction fee. The expected value of the transaction reflects the expected reward that a miner can obtain from including this transaction in his block and  will change during the mining process. Specially, the initial expected value of the transaction is exactly the corresponding transaction fee.
	The miner updates his transaction pool and the expected value of transactions in the following two cases.
	
	
	\begin{algorithm}[!t] 
		\SetAlgoNoLine 
		\SetAlgoNoEnd 
		\DontPrintSemicolon 
		\caption{Mining process in TIPS} 
		\label{alg:mining_dag} 
		\KwOn(Receive a block header $BH_c$) { 
			\If{ the block header $BH_c$ is valid}{
				Extract Bloom filter $BL_c$ from $BH_c$ \;
				Select the transactions hitting $BL_c$ from transaction pool, i.e., $TH_c$ \;
				Update the expected value of the transaction in $TH_c$. Specifically, the expected value of transaction $tx_i$ will be $\epsilon f_i$.
			}
		} 
		\BlankLine 
		
		\KwOn(Receive a block body $BB_c$) {
			\If{the block body $BB_c$ is valid and matches the block header $BH_c$}{
				Remove the transactions that are included in the block body, i.e., $TB_c$, from the transaction pool. \;
				Update the expected value of the transactions that hit the Bloom filter in block header but are not included in the block body. Specifically, $\forall i \in (TH_c / TB_c)$, the expected value of transaction $i$ should be updated to $f_i / \epsilon $.
			}
			\Else{
				update the expected values of all the transactions that hit the Bloom filter of the block header $BH_c$ to their original transaction fee, that is, $\forall i \in TH_c$, the expected value of transaction $i$ should be updated to $f_i / \epsilon$. \;
			}
		} 
		\BlankLine 
		
		\KwOn(The block header $BB_c$ is expired){
			Mark the block header as ``expired''. \;
			Update the expected values of all the transactions that hit the Bloom filter of the block header $BH_c$ to their original transaction fee, that is, $\forall i \in TH_c$, the expected value of transaction $i$ should be updated to $f_i / \epsilon$.\;
		}
		\BlankLine 
		
		\KwOn(Mine a new block $B_c$) { 
			Broadcast the block header $BH_c$ to all nodes in the blockchain network \;
			Broadcast the block body $BB_c$ to all nodes in the blockchain network \;
			Remove the transactions in $B_c$, i.e., $TB_c$, from the transaction pool. \;
		} 
		\BlankLine
		
		\KwOn(Receive a transaction $tx$) {
			Initiate the expected value of the transaction to be its  transaction fee\;
			Add the transaction into the transaction pool.\;
			\If{the transaction pool exceeds it limit}{
				Drop the transaction with the lowest expected value.
			}
		}
		
	\end{algorithm} 
	
	The first case happens when the miner receives a block header $BH_c$ (after the signal propagation delay $\tau$).
	If the block header $BH_c$ is valid, the miner should update the expected value of the transactions in $TH_c$. If a transaction $i$ hits the Bloom filter of a valid signal, its expected value should be multiplied by $\epsilon$, i.e., updated to $\epsilon f_i$, because the Bloom filter implies that the probability that the transaction $i$ is not included in the new block is only $\epsilon$.

	The second case happens when the miner receives a block body $BB_c$ (after the block propagation delay $\Delta$).
	If the block body is valid and matches the block header, the miner should remove the transactions that are included in the block body from the transaction pool.
	Then, the miner needs to recover the expected value of transactions due to false positive results from the Bloom filters, i.e., transactions that hit the Bloom filter but are not included in the block body. 
	Denote $U\backslash V=\{x\in U \text{ and } x\notin V\}$, then for all transactions $i \in  TH_c \backslash TB_c$, the expected value of transaction $i$ should be update to $f_i / \epsilon$. However, if the block body is invalid, we should update the expected values of all the transactions that hit the Bloom filter of the block header $BH_i$ by dividing the false positive rate, i.e., $\forall i \in TH_i$, the expected value of transaction $i$ should be updated to $f_i / \epsilon$.
	
	There exists a hard-coded timeout for the block header \cite{O1_propagation}. 
	When receiving a block header $BH_c$, a miner updates the expected transaction values accordingly, and starts a time counting. 
	If the miner has not received the corresponding block body until a certain timeout period, the block header will be marked as ``expired''. 
	Then, the miner will recover the expected values of all the transactions that hit the Bloom filter of the block header $BH_i$, that is, $\forall i \in TH_i$, the expected value of transaction $i$ should be updated to $f_i/ \epsilon$.
	
	To mine a new block, the miner will select a subset of transactions from the transaction pool based on certain transaction inclusion strategies, which is mentioned in Section \ref{sec:game} and will be further discussed in the following section.

	\subsection{Potential Cost of TIPS}
	
	In this section, we will discuss the potential cost of introducing TIPS in the DAG-based blockchain and show that the extra computation overhead and the communication overhead are insignificant.
	
	\textbf{Computation Overhead}:
	In TIPS, the extra computation overhead for the miners is from (i) maintaining the expected value for the transactions in the pool, and (ii) constructing and validating the Bloom filter in the block header. 
	We will then show that the computation overhead is negligible. 
	First, to maintain the expected value for the transactions, the miners can use some lookup data structure, such as the hash map, to update the transaction's expected value in $O(1)$. 
	Second, when mining a new block, the miners need to construct the Bloom filter in the block header. When receiving a newly-mined block, the miners also need to validate the Bloom filter, and this validation can be done by constructing the Bloom filter based on the transactions in the block body. 
	We next show that constructing the Bloom filter is computationally cheap.
	For a block including $2000$ transactions, the extra time for constructing the Bloom filter with $n=20000$ bits and $h=5$ hash functions, and validating it is only $6 \mu s$ on a PC with a CPU of Intel i7-1165G7. Therefore, we can claim that the extra computation overhead for introducing TIPS is insignificant.
	
	\textbf{Communication Overhead}:
	In TIPS, the miners need to broadcast the block header including the Bloom filter as the signal before broadcasting the block body, which will result in extra communication overhead. As shown in Section \ref{sec:bloom}, the total size of the signal is only $2.5$KB, which is less than $1\%$ of the original block size. 
	Besides, in practice, we can use blockchain relay networks to further accelerate the propagation of the block header \cite{bi2018accelerated}.
	Therefore, the extra communication overhead for introducing TIPS is also insignificant.
	
	\section{TIPS Breaks Down the Dilemmas}\label{sec:tips_break}
	
	In this section, we first the equilibrium  of the transaction inclusion game with TIPS and show that TIPS can lower the effective network delay and thus make the equilibrium transaction inclusion strategy approach to the top $n$ strategy. In this way, we show that TIPS can break the revenue and throughput dilemma.
	
	\subsection{Lowering Effective Network Delay}

	We first study the equilibrium  of the transaction inclusion game with TIPS.
	The following theorem shows an interesting result that when the false positive rate of the Bloom filter is low enough, the equilibrium is similar to the case in Theorem \ref{th:NE_general}, but with the effective network delay drastically decreased to $\tau$, where $\tau$ is the network propagation delay for the signal in TIPS.
	\begin{theorem}\label{th:NE_signal}
		If the false positive probability of the Bloom filter satisfies $\epsilon < \frac{f_{m}}{f_1}$, the symmetric equilibrium in TIPS is $\textbf{p}^*(\tau)$, where 
		\begin{equation}\nonumber
			\begin{aligned}
				p_i^*(\tau) = \begin{cases}
					\min \{r^{-1}\left( \frac{c_{l^*} }{f_i} | \tau \right), 1  \}, & 1 \leq i \leq l^*,\\
					0, & l^* < i \leq m.
				\end{cases} \\
			\end{aligned}
		\end{equation}
		Also, we have
		\begin{equation}\nonumber
			\begin{aligned}
				&F_l(c) = \sum_{i=1}^{l} \min \{ r^{-1}(c/f_i | \tau),1 \} - n, \quad \forall 1 \leq l \leq m, \\
				&l^* = \max \{ l \leq m | \forall i \leq l: F_i\left( f_i \right) \leq 0  \}, \\
				&c_{l^*} \text{ is the root of } F_{l^*}.
			\end{aligned}
		\end{equation}
	\end{theorem}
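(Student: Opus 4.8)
The plan is to show that, under the condition $\epsilon < f_m/f_1$, the transaction inclusion game induced by TIPS is strategically equivalent to the game of Theorem \ref{th:NE_general}, but with the whole-block propagation delay $\Delta$ replaced by the signal propagation delay $\tau$. Once this equivalence is established at the level of the payoff function, the equilibrium characterization is inherited verbatim from Theorem \ref{th:NE_general} with $\Delta$ substituted by $\tau$, which yields exactly the claimed strategy $\textbf{p}^*(\tau)$ together with the accompanying definitions of $F_l$, $l^*$, and $c_{l^*}$.

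First I would pin down how the signal shortens the effective collision window. Consider a miner $A$ who mines a block containing a fresh transaction $i$ (of value $f_i$) at time $0$ and immediately broadcasts its header. Every other miner receives this signal within $\tau$, after which transaction $i$ hits the Bloom filter and its expected value is reset to $\epsilon f_i$ by Algorithm \ref{alg:mining_dag}. The key observation is that $\epsilon < f_m/f_1$ gives $\epsilon f_i \le \epsilon f_1 < f_m \le f_j$ for every pair $i,j$, so a signaled transaction's value drops strictly below the value of every fresh transaction. Since each block selects $n$ transactions and the pool still offers fresh transactions of strictly higher value, no miner re-includes transaction $i$ after receiving the signal. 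Consequently, the only blocks that can collide with $A$ on transaction $i$ are those mined by competitors during the window $[0,\tau]$, before they learn of $A$'s inclusion.

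With the collision window established as $\tau$, I would recompute the payoff exactly as in Lemma \ref{lemma:reward}. If the competitors play $\textbf{p}'$, the number of rival blocks that include $i$ during $[0,\tau]$ is Poisson with mean $\lambda \tau p_i'$, and the same expectation $\mathbb{E}[1/(\iota+1)] = (1-e^{-\lambda\tau p_i'})/(\lambda\tau p_i')$ gives the per-transaction factor $r(p_i'|\tau)$. Summing over transactions yields $R(\textbf{p}|\textbf{p}') = \sum_{i=1}^m p_i f_i\, r(p_i'|\tau)$, identical in form to the payoff of the original game with $\Delta$ replaced by $\tau$. Invoking Theorem \ref{th:NE_general} with $\tau$ in place of $\Delta$ then delivers the stated equilibrium $\textbf{p}^*(\tau)$.

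The main obstacle will be the rigorous justification that the effective collision window is exactly $\tau$ rather than something interpolating between $\tau$ and $\Delta$. This requires controlling the dynamics of the expected-value updates: I must confirm that false-positive entries (transactions in $TH_c\setminus TB_c$) and their subsequent restoration do not create additional collisions on transaction $i$, that at any decision epoch at least $n$ fresh transactions of value exceeding $\epsilon f_1$ remain in the pool so that the domination argument of the second step actually binds, and that the no-re-inclusion property is consistent across the overlapping propagation windows of concurrently mined blocks. Handling these pool-dynamics and timing details carefully is where the condition $\epsilon < f_m/f_1$ does its real work; the remaining equilibrium computation is then a direct transcription of Theorem \ref{th:NE_general}.
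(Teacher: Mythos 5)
Your proposal follows essentially the same route as the paper's proof: the paper likewise reduces the theorem to showing that, under $\epsilon < f_m/f_1$, no rational miner includes a signaled transaction after receiving the header (via the identical chain $\epsilon f_i \leq \epsilon f_1 < f_m$), so the effective collision window shrinks from $\Delta$ to $\tau$ and the equilibrium of Theorem \ref{th:NE_general} is inherited with $\tau$ in place of $\Delta$. The only difference is in how that no-re-inclusion step is formalized --- the paper argues by contradiction, re-sorting the pool by expected value and showing the signaled transaction's index exceeds the equilibrium cutoff $l_B^*$, while you argue by direct value dominance --- and you correctly flag the residual subtlety (the comparison must be made against equilibrium expected rewards, which carry the collision discount $r(\cdot\,|\,\tau)$, rather than raw fees), which is precisely what the paper's threshold-function computation handles.
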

	
	\begin{proof}
		As shown in Figure \ref{fig:model}, once a miner successfully mines a new block including a set of transactions $\mathbb{T}$, he will broadcast the signal immediately. Other miners will receive the signal after the signaling propagation delay $\tau$. If other miners would not include any transaction in $\mathbb{T}$ after receiving the signal, then Theorem 2 can be proved. Consequently, to prove the above theorem, it is sufficient to show that when $\epsilon < \frac{f_{m}}{f_1}$, it is not rational for other miners to include any transaction in $\mathbb{T}$ after receiving the signal.
		
		We consider the opposite holds, that  is, other miners will include at least one of the transactions in $\mathbb{T}$ in their block with a a positive probability, even after they have received the corresponding signal but not received the whole block yet. In this situation, we denote the expected reward for a miner to include transaction $i$ given that other miners include transaction $i$ in their blocks with probability $p_i$ is $f_i \cdot \tilde{r}^*(p_i)$. Then we have $\tilde{r}^*(p) < r(p | \tau)$. Besides, $\tilde{r}^*(\cdot)$ is also monotonically decreasing, so is its inverse function $\tilde{r}^{*-1}(\cdot)$. 
		We analyze the situations before receiving the signal (Situation A) and after receiving the signal (Situation B) as follows.
		
		\textbf{Situation A: Before receiving the signal.}
		The transactions in the pool are sorted in descending order of their fees, i.e., $\mathbb{S}_A = \{ tx_1, tx_2, \ldots, tx_m  \}$ with the corresponding subscript indexs of $s_A = \{1,2, \ldots,m \}$. According to Theorem 1, we know that $\textbf{p}^A$ is a symmetric equilibrium, where
		\begin{equation}\nonumber
			\begin{aligned}
				p^A_i = \begin{cases}
					\min \{\tilde{r}^{-1}\left( \frac{c_{l^*} }{f_{s_A(i)}} \right), 1  \}, & 1 \leq i \leq l_A^*,\\
					0, & l_A^* < i \leq m,
				\end{cases} \\
			\end{aligned}
		\end{equation}
		and we have
		\begin{equation}\nonumber
			\begin{aligned}
				&F^A_{l}(c) = \sum_{i=1}^{l} \min \{ \tilde{r}^{-1}(c/f_{s_A(i)}),1 \} - n, \quad \forall 1 \leq l \leq m, \\
				&l_A^* = \max \{ l \leq m | \forall i \leq l: F^A_i\left( f_{s_A(i)} \right) \leq 0  \}, \\
				&c_{l_A^*} \text{ is the root of } F^A_{l^*}.
			\end{aligned}
		\end{equation}
		Therefore, a transaction $i$ will be included only when $i \leq l_A^*$.

		\textbf{Situation B: After receiving the signal.}
		The corresponding signal will be broadcast to the whole network. 
		Without loss of generality, we consider transaction $tx_{\tilde{i}}$, where $\tilde{i} \leq l_A^*$. After the miner includes $tx_{\tilde{i}}$ in its block and  the corresponding signal has been received by other miners, the expected value of $tx_{\tilde{i}}$, i.e., $e_{\tilde{i}}$, will be $\epsilon f_{\tilde{i}}$. Since $\epsilon < f_{m} / f_1$, we have
		\begin{equation}\nonumber
			\epsilon f_{\tilde{i}} \leq \epsilon f_1 < f_{m} \leq f_{l_A^* + 2}.
		\end{equation}
		Then the transactions in the pool  sorted in descending order of their expected values will be 
		\begin{equation}\nonumber
			\mathbb{S}_B = \{ tx_1, \cdot\cdot, tx_{\tilde{i}-1}, tx_{\tilde{i}+1}, \cdot\cdot, tx_{l^*}, tx_{l^*+1}, \cdot\cdot, tx_{\tilde{i}}, \cdot\cdot, tx_m  \},
		\end{equation}
		with the corresponding subscript indexs of 
		\begin{equation}\nonumber
			s_B = \{1,2, \cdot\cdot, \tilde{i}-1,\tilde{i}+1,\cdot\cdot, l_A^*,l_A^*+1, \cdot\cdot, \tilde{i},\cdot\cdot, m \}.
		\end{equation}
		According to Theorem 1, we know that $\textbf{p}^B$ is a symmetric equilibrium, where
		\begin{equation}\nonumber
			\begin{aligned}
				p^B_i = \begin{cases}
					\min \{\tilde{r}^{-1}\left( \frac{c_{l^*} }{f_{s_B(i)}} \right), 1  \}, & 1 \leq i \leq l^*,\\
					0, & l^* < i \leq m,
				\end{cases} \\
			\end{aligned}
		\end{equation}
		and we have
		\begin{equation}\nonumber
			\begin{aligned}
				&F^B_{l}(c) = \sum_{i=1}^{l} \min \{ \tilde{r}^{-1}(c/f_{s_B(i)}),1 \} - n, \quad \forall 1 \leq l \leq m, \\
				&l_B^* = \max \{ l \leq m | \forall i \leq l: F_i\left( f_i \right) \leq 0  \}, \\
				&c_{l_B^*} \text{ is the root of } F^A_{l^*}.
			\end{aligned}
		\end{equation}
		Then we have
		\begin{equation}\nonumber
			\begin{aligned}
				&F^B_{l_A^* + 1}(f_{s_B(l_A^*+1)}) - F^A_{l_A^* + 1}(f_{s_A(l^*+1)}) \\
				=& \sum_{i=1}^{l_A^* + 1} \min \{ \tilde{r}^{-1}\left( \frac{f_{l_A^* + 1}}{f_{s_A(i)}}  \right), 1   \} - \!\! \sum_{i=1}^{l_B^* + 1} \min \{ \tilde{r}^{-1}\left( \frac{f_{l_B^* + 1}}{f_{s_B(i)}}  \right), 1   \} \\
				=& \min \{ \tilde{r}^{-1}\left( \frac{f_{l_A^* + 2}}{f_{l_A^* + 2}}  \right), 1   \} - \min \{ r^{-1}\left( \frac{f_{l_A^* + 1}}{f_{\tilde{i}}}  \right), 1   \} \\
				=& 1 - \min \{ \tilde{r}^{-1}\left( \frac{f_{l_A^* + 1}}{f_{\tilde{i}}}  \right), 1   \} \geq 0
			\end{aligned}
		\end{equation}
		Therefore, we have $F^B_{l_A^* + 1}(f_{s_B(l_A^*+1)}) > 0$, which implies that
		\begin{equation}\nonumber
			\text{Index}(tx_{\tilde{i}}) > l_A^* + 1 > l_B^*.
		\end{equation}
		Thus, transaction $tx_{\tilde{i}}$ would not be included in any block after receiving the corresponding signal, which contradicts the previous assumption. The proof is thus completed.
	\end{proof}
	
	\ADD{In the condition $\epsilon < f_m / f_1$ in Theorem \ref{th:NE_signal}, $\epsilon$ denotes the false positive probability of the Bloom filter, $f_1$ denotes the highest transaction fee, and $f_m$ denotes the lowest transaction fee. This condition implies that when the spread of the transaction fee is large, we need a Bloom filter with high precision, i.e., low false positive probability, to achieve the equilibrium strategy in Theorem \ref{th:NE_signal}. Intuitively, when $\epsilon < f_m / f_1$, once a transaction $i$ with fee $f_i (f_m \leq f_i \leq f_1)$ hits the Bloom filter, its expected value will be updated to $\epsilon f_i$, where $\epsilon f_i < f_m$. In this case, other miners would avoid including transaction $i$ in their newly-mined block, and thus the equilibrium strategy in Theorem \ref{th:NE_signal} holds.}
	
	\noindent\textbf{Remark: TIPS shortens effective network delay.}
	Compared with the equilibrium in Theorem \ref{th:NE_general}, we find that in TIPS, after the time period $\tau$ of the signal propagation, the transactions included in the newly-mined block will not be included by any rational miner. Therefore, TIPS shortens the effective network propagation delay from the block broadcast time $\Delta$ to the drastically smaller signal broadcast delay $\tau (\tau \ll \Delta)$, because as mentioned in Section \ref{sec:bloom}, the size of the signal can be smaller than $1/100$ the size of the whole block. 
	The equilibrium in Theorem \ref{th:NE_signal} holds in TIPS when the Bloom filter's false positive is small enough, i.e., $\epsilon < f_m / f_1$.  This implies that when the Bloom filter's false positive is small enough, the signal in TIPS can provide information which is precise enough for miners to avoid transaction collisions.
	We will next show that this condition on the Bloom filter is not restrictive in real-world blockchain systems. 
	
	
	\begin{figure}[!ht]
		\includegraphics[width=0.9\linewidth,height=0.3\linewidth]{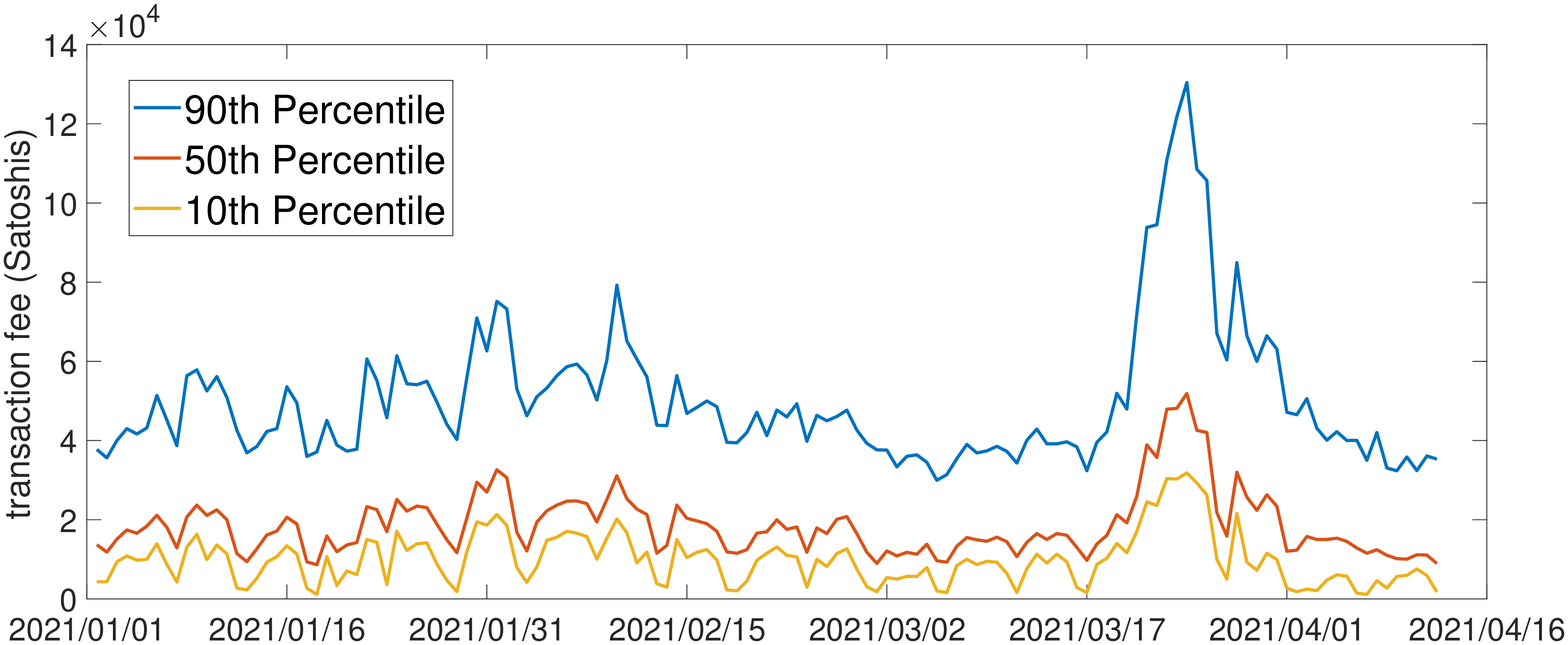}
		\caption{Distribution of transaction fees in Bitcoin}
		\label{fig:fees_distribution}
	\end{figure}

	To see that the condition of $\epsilon<f_{m}/f_{1}$ is practical, we collect the data from Bitcoin \cite{bitcoinvisuals} and investigate the distribution of the transaction fees. Bitcoin admits one of the most popular blockchain transaction markets, thus, we believe that the transaction fee statistics in Bitcoin is representative enough.
	We find that in realistic blockchain systems like Bitcoin, there exist few transactions with unnecessarily high transaction fees and few transactions with extremely low transaction fees. To avoid the impact of this abnormal transaction fee, we focus on the 90th percentile and the 10th percentile for transaction fees of the transactions included in the block, which is visualized in Figure \ref{fig:fees_distribution}. We have that
	\begin{equation}\nonumber
		\omega = \min \frac{\text{10th percentile}}{\text{90th percentile}} = 0.0304 .
	\end{equation}
	As mentioned in Section \ref{sec:bloom}, if we allocate 8 bits in the Bloom filter for a transaction in the pool, i.e., $b/n = 8$, the false positive probability $\epsilon = 0.0217$, which is smaller than $\omega$ defined above. This implies that the condition that the false positive probability of the Bloom filter should be small enough, i.e., $\epsilon < f_{m} / f_1$, in Theorem \ref{th:NE_signal}, is almost always satisfied in real-world blockchain systems. 
	Hence, we consider a scenario where the condition $\epsilon < f_{m} / f_1$ is always satisfied in the following discussions.


	\subsection{Approaching Top $n$ Strategy}\label{sec:approach:top_n}
	
	Here, we will show that by lowering the effective network delay, the equilibrium transaction inclusion strategy in TIPS approaches to the top $n$ strategy, which makes it possible for TIPS to break down the dilemmas in DAG-based blockchain.
	
	We first show that the top $n$ strategy in TIPS is an $\eta$-approximate Nash equilibrium strategy, and $\eta$ represents the gap between the top $n$ strategy and the equilibrium strategy.
	
	\begin{theorem}\label{th:approximate}
		The top n strategy, i.e., $\textbf{p}^{\rm top}$ is an $\eta$-approximate Nash equilibrium, where
		\begin{equation}\nonumber
			\eta \leq \left| n \left(1 - \frac{1 - e^{-\lambda \tau}}{\lambda \tau}  \right) f_n \right|.
		\end{equation}
		The equation holds if and only if the transactions are homogeneous, that is, the transaction fees are the same.
		Specially, when $\tau \rightarrow 0$, the top $n$ strategy is the Nash equilibrium.
	\end{theorem}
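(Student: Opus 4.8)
The plan is to directly estimate the approximation gap
\[
\eta = \max_{\textbf{p}\in\mathbb{P}} R(\textbf{p}|\textbf{p}^{\rm top}) - R(\textbf{p}^{\rm top})
\]
using Lemma~\ref{lemma:reward}, where---by the effective-delay reduction established in Theorem~\ref{th:NE_signal}---the relevant propagation delay in TIPS is $\tau$ rather than $\Delta$, so that $r(p|\tau)=\frac{1-e^{-\lambda\tau p}}{\lambda\tau p}$. First I would evaluate both reward terms against the opponents' strategy $\textbf{p}^{\rm top}$, for which $p_i'=1$ when $i\le n$ and $p_i'=0$ when $i>n$. Writing $\beta:=r(1|\tau)=\frac{1-e^{-\lambda\tau}}{\lambda\tau}$ and noting $r(0|\tau)=\lim_{p\to0}\frac{1-e^{-\lambda\tau p}}{\lambda\tau p}=1$ with $0<\beta<1$, the symmetric payoff collapses to $R(\textbf{p}^{\rm top})=\beta\sum_{i=1}^{n}f_i$, while for an arbitrary deviation $\textbf{p}\in\mathbb{P}$ we get $R(\textbf{p}|\textbf{p}^{\rm top})=\beta\sum_{i=1}^{n}p_if_i+\sum_{i=n+1}^{m}p_if_i$.

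The core of the argument is to bound the gap \emph{without} explicitly solving for the best response. Subtracting the two expressions gives
\[
R(\textbf{p}|\textbf{p}^{\rm top}) - R(\textbf{p}^{\rm top}) = \beta\sum_{i=1}^{n}(p_i-1)f_i + \sum_{i=n+1}^{m}p_if_i.
\]
Here I would exploit the ordering $f_1\ge\cdots\ge f_m$ relative to the pivot $f_n$: for $i\le n$ the factor $p_i-1\le0$ and $f_i\ge f_n$, so $(p_i-1)f_i\le(p_i-1)f_n$; for $i>n$ the factor $p_i\ge0$ and $f_i\le f_n$, so $p_if_i\le p_if_n$. Introducing the displaced mass $q:=\sum_{i=n+1}^{m}p_i=n-\sum_{i=1}^{n}p_i$ (valid since $\sum_ip_i=n$), these two estimates collapse the gap to at most $-\beta f_n q + f_n q=(1-\beta)f_nq$. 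Because each $p_i\le1$ forces $q\le n$, I obtain $\eta\le n(1-\beta)f_n=\bigl|n\bigl(1-\tfrac{1-e^{-\lambda\tau}}{\lambda\tau}\bigr)f_n\bigr|$, which is the claimed bound, the absolute value being redundant as the right-hand side is nonnegative.

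For the equality characterization I would trace back where each inequality is tight. Equality requires $q=n$, which by $\sum_ip_i=n$ forces $p_i=0$ for every $i\le n$; then tightness of $(p_i-1)f_i\le(p_i-1)f_n$ on $i\le n$ forces $f_i=f_n$ there, and tightness of $p_if_i\le p_if_n$ forces $f_i=f_n$ on every $i>n$ carrying positive mass---i.e.\ the transactions are homogeneous, $f_1=\cdots=f_m=f_n$. Conversely, when all fees coincide the best response simply relocates all mass onto transactions outside the top $n$, attaining the bound exactly. Finally, the limiting claim is immediate: as $\tau\to0$, $\beta=\frac{1-e^{-\lambda\tau}}{\lambda\tau}\to1$, so $\eta\to0$ and $\textbf{p}^{\rm top}$ becomes an exact Nash equilibrium.

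I expect the main obstacle to be the gap bound itself: resisting the temptation to compute the best response explicitly (a fractional-knapsack linear program whose solution interleaves the coefficients $\beta f_i$ and $f_i$ in a fee-dependent way) and instead bounding both sums uniformly through the single pivot $f_n$ together with the mass-conservation constraint $\sum_ip_i=n$. The equality direction also needs mild care, since transactions receiving zero weight in the best response leave their inequalities vacuous and a full relocation of mass presumes enough low-priority transactions are available; the homogeneity conclusion should therefore be read as applying to the fees that actually enter the tight configuration.
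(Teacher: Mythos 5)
Your proof is correct and follows essentially the same route as the paper's: both evaluate the payoffs against $\textbf{p}^{\rm top}$ as $R(\textbf{p}\mid\textbf{p}^{\rm top})=\beta\sum_{i\le n}p_if_i+\sum_{i>n}p_if_i$ with $\beta=\frac{1-e^{-\lambda\tau}}{\lambda\tau}$, bound the deviation gap through the fee ordering (the paper via the intermediate bound $nf_{n+1}-n\beta f_n\le n(1-\beta)f_n$, you via the single pivot $f_n$ together with the mass-conservation identity $q=n-\sum_{i\le n}p_i\le n$), and then let $\tau\to 0$. Your write-up is in fact slightly more complete: the paper's proof omits the ``if and only if the transactions are homogeneous'' equality clause entirely, which you establish by tracing tightness of each inequality, and your observation that attaining the bound requires enough low-priority transactions (i.e.\ $m\ge 2n$) is a legitimate caveat to the equality claim as stated.
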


	For convenience, we denote that $g(\tau) = 1 - \frac{1 - e^{-\lambda \tau}}{\lambda \tau}$. Note that $g(\tau)$ is monotonically increasing in $\tau$, which implies that the extra revenue from deviating from the top $n$ strategy will decrease as the effective propagation delay becomes smaller and the equilibrium transaction inclusion strategy will gradually converge to the top $n$ strategy. The effective network propagation delay in TIPS is $\tau$, which is drastically small enough. Therefore, the top $n$ strategy in TIPS is a good approximation equilibrium strategy. Besides, we show that when the effective network propagation delay $\tau$ is smaller enough,  the top $n$ strategy in TIPS can be the unique equilibrium strategy, which is shown in the following theorem.
	

	\begin{theorem}\label{th:NE_top_n}
		The top $n$ strategy, i.e., $\textbf{p}^{\rm top}$ is the unique Nash equilibrium when
		\begin{equation}\label{eq:NE_condiction}
			\tau \leq \frac{1}{\lambda} \varphi^{-1}\left( \frac{f_{n+1}}{f_{n}} \right),
		\end{equation}
		where $\varphi(x) = \frac{1 - e^{-x}}{x}$, and $\varphi^{-1}(x)$ is its inverse function.
	\end{theorem}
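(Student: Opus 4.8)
The plan is to establish the theorem in two movements—first that $\textbf{p}^{\rm top}$ is a Nash equilibrium under the stated bound, then that it is the only one—both resting on the fact that the reward $R(\textbf{p}\mid\textbf{p}')=\sum_i p_i f_i\,r(p_i'\mid\tau)$ is \emph{linear} in a miner's own strategy $\textbf{p}$. Since $r(p\mid\tau)=\varphi(\lambda\tau p)$ with $\varphi(x)=\frac{1-e^{-x}}{x}$ strictly decreasing on $[0,\infty)$ and $\varphi(0)=1$, the best response to any symmetric profile $\textbf{p}'$ is a fractional-knapsack solution: assign $p_i=1$ to the $n$ transactions with the largest marginal values $v_i:=f_i\,\varphi(\lambda\tau p_i')$ and $p_i=0$ to the rest, filling ties to meet $\sum_i p_i=n$. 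I would record this characterization first, since everything else follows from it.

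For \textbf{existence}, I would evaluate the marginal values when every other miner plays $\textbf{p}^{\rm top}$, so $p_i'=1$ for $i\le n$ and $p_i'=0$ otherwise. Then $v_i=f_i\varphi(\lambda\tau)$ for $i\le n$ and $v_j=f_j\varphi(0)=f_j$ for $j>n$. Because fees are sorted in descending order, the smallest included value is $v_n=f_n\varphi(\lambda\tau)$ and the largest excluded value is $v_{n+1}=f_{n+1}$. Using monotonicity of $\varphi$, the hypothesis $\tau\le\frac1\lambda\varphi^{-1}\!\left(\frac{f_{n+1}}{f_{n}}\right)$ is exactly $f_n\varphi(\lambda\tau)\ge f_{n+1}$; hence $\min_{i\le n}v_i\ge\max_{j>n}v_j$, so selecting $\{1,\dots,n\}$ maximizes $\sum_i p_i v_i$ and $\textbf{p}^{\rm top}$ best-responds to itself.

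For \textbf{uniqueness}, I would argue by contradiction: let $\textbf{q}\neq\textbf{p}^{\rm top}$ be a symmetric equilibrium. Since $\sum_k q_k=n$ with each $q_k\in[0,1]$, if all $q_i=1$ for $i\le n$ the budget is exhausted and $\textbf{q}=\textbf{p}^{\rm top}$; so there exists $i\le n$ with $q_i<1$, and consequently some $j>n$ with $q_j>0$. The self-best-response condition forbids a profitable shift of mass from the active $j$ to the non-saturated $i$, giving $v_j\ge v_i$ for $v_k=f_k\varphi(\lambda\tau q_k)$. I would then contradict this with the chain
\begin{equation}\nonumber
v_i=f_i\varphi(\lambda\tau q_i)\ge f_n\varphi(\lambda\tau q_i)>f_n\varphi(\lambda\tau)\ge f_{n+1}\ge f_j\ge f_j\varphi(\lambda\tau q_j)=v_j,
\end{equation}
where the strict step uses $q_i<1$ with strict monotonicity of $\varphi$ (and $f_n>0$), the middle inequality is again the hypothesis, and the outer inequalities use the descending fee order and $\varphi\le1$. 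This forces $v_i>v_j$, contradicting $v_j\ge v_i$, so no such $\textbf{q}$ exists.

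The two knapsack optimizations are routine; the step I expect to need the most care is the \textbf{uniqueness} chain, specifically guaranteeing that the inequality stays \emph{strict} even when $\tau$ meets the bound with equality (so that $f_n\varphi(\lambda\tau)=f_{n+1}$). The strictness must come entirely from $q_i<1$ and the strict monotonicity of $\varphi$, not from the hypothesis. I would also flag the degenerate case $f_n=f_{n+1}$, where the bound collapses to $\tau=0$ and ties among equal fees require separate handling (or exclusion by assuming $f_n>f_{n+1}$). Finally, I would note that this is consistent with Theorem \ref{th:NE_signal}: the bound is precisely the regime in which that construction returns $l^*=n$ with every $p_i^*=1$.
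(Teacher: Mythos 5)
Your proof is correct, and while its existence half proves the same core inequality as the paper, the organization differs and your uniqueness half is genuinely stronger. The paper argues over pure set deviations: a deviator choosing a set $\mathbb{B}$ against the top-$n$ set $\mathbb{A}=\{1,\dots,n\}$, splitting by the overlap $n_1=|\mathbb{A}\cap\mathbb{B}|$, bounding the deviation reward by $\sum_{i=1}^{n_2}f_{n+i}+\frac{1-e^{-\lambda\tau}}{\lambda\tau}\sum_{i=1}^{n_1}f_i$, and closing via the identity $\max_{n_1}\frac{\sum_{i=1}^{n-n_1}f_{n+i}}{\sum_{i=1}^{n-n_1}f_{n_1+i}}=\frac{f_{n+1}}{f_n}$ combined with condition (\ref{eq:NE_condiction}); your fractional-knapsack formulation with marginal values $v_i=f_i\varphi(\lambda\tau p_i')$ reaches the same conclusion more transparently, and correctly translates the hypothesis into $f_n\varphi(\lambda\tau)\ge f_{n+1}$. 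Where you add real content is uniqueness: the paper disposes of it in one unproved sentence (``the top $n$ strategy strictly dominates other strategies''), whereas your exchange argument --- at any symmetric equilibrium $\textbf{q}\neq\textbf{p}^{\rm top}$ pick $i\le n$ with $q_i<1$ and $j>n$ with $q_j>0$, then $v_i\ge f_n\varphi(\lambda\tau q_i)>f_n\varphi(\lambda\tau)\ge f_{n+1}\ge f_j\ge v_j$ contradicts the LP optimality condition $v_j\ge v_i$ --- actually establishes it, with the strictness correctly sourced from $q_i<1$ and strict monotonicity of $\varphi$ rather than from the (possibly tight) hypothesis. As a bonus, your chain shows that under (\ref{eq:NE_condiction}) the top-$n$ strategy is a best response to \emph{every} opponent profile, since $v_i\ge f_i\varphi(\lambda\tau)\ge f_n\varphi(\lambda\tau)\ge f_{n+1}\ge v_j$ for all $i\le n<j$; this weak-dominance observation is the precise content behind the paper's one-line remark. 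Your flagged caveats are also apt: when $\tau=0$ (equivalently $f_n=f_{n+1}$, where the bound collapses) the strict step degenerates and payoff-equivalent equilibria among tied fees exist, an edge case the paper's proof silently glosses over as well.
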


	Theorem \ref{th:approximate} and Theorem \ref{th:NE_top_n} show that the top $n$ strategy is very possible to be the equilibrium strategy in TIPS, and is at least a good approximation transaction inclusion strategy, because the effective network propagation delay in TIPS  $\tau$  is drastically small enough since we only need to broadcast a small-size signal. Therefore, we adopt the top $n$ strategy in TIPS in the following analysis.

	\subsection{Breaking Down the Revenue Dilemma}

	The analysis above shows that breaking down the revenue dilemma, i.e., achieving the top $n$ strategy, is difficult, because the top $n$ strategy is the Nash equilibrium strategy only if the network delay is very small. This is not practical in the standard protocol (i.e., without TIPS) since broadcasting the whole block in the blockchain network takes a long time. However, TIPS drastically lowers the effective network propagation delay, making the top $n$ strategy feasible, which breaks down the revenue dilemma.
	
	Figure \ref{fig:equilibrium_strategy} shows the comparison of equilibrium strategies with TIPS and the standard protocol, under a moderate parameter $\tau=0.01 \Delta$, while the Bloom filter is usually much less than $1\%$ of the size of the block body (See Section \ref{sec:experiment} for detailed experiment set up). Figure \ref{fig:equilibrium_strategy} greatly supports that the top $n$ strategy is a good approximation transaction inclusion strategy in TIPS, while the equilibrium strategy in the standard protocol (i.e., without TIPS) gradually converges to the random strategy when the network propagation delay is large.
	
	\begin{figure}[!ht]
		\includegraphics[width=1.0\linewidth,height=0.4\linewidth]{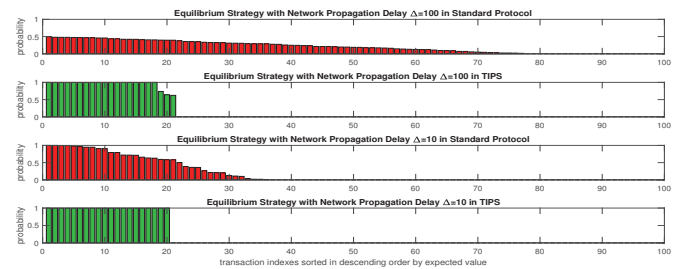}
		\caption{Equilibrium strategy with different network propagation delay}
		\label{fig:equilibrium_strategy}
	\end{figure}
	
	To demonstrate the efficiency of TIPS under the revenue dilemma, we investigate the performance of the equilibrium strategy under TIPS as follows.
	\begin{definition}
		The efficiency of the  equilibrium strategy of the transaction inclusion game in the DAG-based blockchain under the revenue dilemma is defined as the ratio of the miners' revenue $R$ under the equilibrium strategy of the transaction inclusion game and the highest miners' revenue achieved by any transaction inclusion strategy, which is shown as follows:
		\begin{equation}
			\rm Efficiency(R) = \frac{\rm Revenue \ of \ Equilibrium}{\rm Revenue \  of \ optimal \ strategy}.
		\end{equation}
	\end{definition}
	
	Then we have the following result, which can demonstrate the high efficiency of TIPS.
	\begin{theorem}\label{th:efficiency:fsr}
		The efficiency of the equilibrium strategy of the transaction inclusion game in the DAG-based blockchain with TIPS under the revenue dilemma is
		\begin{equation}\nonumber
			\begin{aligned}
				&  \text{Efficiency}(R) \geq  \frac{(1 - e^{-\lambda \tau})}{\left( \lfloor \frac{m}{n} \rfloor - \sum_{k=0}^{\lfloor \frac{m}{n} \rfloor} \frac{(\lambda \tau)^k}{k!} e^{-\lambda \tau} \right)} .
			\end{aligned}
		\end{equation}
	\end{theorem}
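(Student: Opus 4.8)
The plan is to evaluate the numerator and the denominator of $\text{Efficiency}(R)$ separately and then take their ratio. For the numerator I would invoke Theorem~\ref{th:approximate} and Theorem~\ref{th:NE_top_n}, which identify the top $n$ strategy $\textbf{p}^{\rm top}$ as the equilibrium of the transaction inclusion game in TIPS with effective delay $\tau$. Plugging $\textbf{p}^{\rm top}$ into Lemma~\ref{lemma:reward} with $\Delta$ replaced by $\tau$ gives the equilibrium revenue directly: since $p_i'=1$ for $i\le n$ and $p_i'=0$ otherwise, we have $r(1\mid\tau)=\frac{1-e^{-\lambda\tau}}{\lambda\tau}$, so the revenue of the equilibrium equals $\frac{1-e^{-\lambda\tau}}{\lambda\tau}\sum_{i=1}^{n}f_i$.

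For the denominator I would first rewrite the symmetric revenue in a form that makes the maximization tractable. Substituting $r(p_i\mid\tau)=\frac{1-e^{-\lambda\tau p_i}}{\lambda\tau p_i}$ into $R(\textbf{p})=R(\textbf{p}\mid\textbf{p})$ collapses the $p_i$ prefactors and yields $R(\textbf{p})=\frac{1}{\lambda\tau}\sum_{i=1}^{m}f_i\bigl(1-e^{-\lambda\tau p_i}\bigr)$, so the optimal revenue is $\frac{1}{\lambda\tau}\max_{\textbf{p}\in\mathbb{P}}\sum_{i}f_i(1-e^{-\lambda\tau p_i})$. The map $p\mapsto 1-e^{-\lambda\tau p}$ is increasing and concave with value $0$ at $p=0$, so the budget constraint $\sum_i p_i=n$ together with concavity is exactly what limits how much a revenue-maximizer can gain by spreading probability mass off the top transactions to dilute collisions. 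The estimate I would establish is that this maximum is at most $\bigl(\lfloor m/n\rfloor-\sum_{k=0}^{\lfloor m/n\rfloor}\frac{(\lambda\tau)^k}{k!}e^{-\lambda\tau}\bigr)\sum_{i=1}^{n}f_i$. To reach it I would partition the sorted pool into $\lfloor m/n\rfloor$ consecutive blocks of $n$ transactions, use the monotonicity of the fees in the rearrangement form $f_{(j-1)n+t}\le f_t$ to replace each fee in the $j$-th block by the corresponding top-$n$ fee $f_t$, and apply Jensen's inequality to the concave map within each block; summing the per-block contributions is where the truncated Poisson tail $\sum_{k=0}^{\lfloor m/n\rfloor}\frac{(\lambda\tau)^k}{k!}e^{-\lambda\tau}=\Pr[\mathrm{Pois}(\lambda\tau)\le\lfloor m/n\rfloor]$ emerges, the rate $\lambda\tau$ being the competitor intensity at full inclusion $p=1$.

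Finally I would form the ratio: the common factor $\sum_{i=1}^{n}f_i$ cancels between the numerator and the upper bound on the denominator, and the two factors of $1/(\lambda\tau)$ cancel as well, leaving exactly $\frac{1-e^{-\lambda\tau}}{\lfloor m/n\rfloor-\sum_{k=0}^{\lfloor m/n\rfloor}\frac{(\lambda\tau)^k}{k!}e^{-\lambda\tau}}$. The main obstacle is the denominator bound. Unlike the numerator, the revenue-maximizing symmetric strategy is a water-filling solution $f_i e^{-\lambda\tau p_i}=\mu$ whose exact form depends on the entire fee vector, so the difficulty is to bound it \emph{uniformly} over all admissible fee profiles. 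I expect the two decisive ingredients to be (i) the rearrangement inequality $f_{(j-1)n+t}\le f_t$, which decouples the heterogeneous fees from the spreading pattern, and (ii) the concavity/Jensen step, which converts the per-block optimal spreading into the partial Poisson sum; the routine but error-prone bookkeeping will be the residual block of size $m-n\lfloor m/n\rfloor$ and the boundary constraints $p_i\le 1$.
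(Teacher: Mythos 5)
Your overall architecture (lower-bound the equilibrium revenue by the top-$n$ value, upper-bound the optimal revenue by a constant times $\sum_{i=1}^{n}f_i$, cancel) matches the paper's, and your rewrite $R(\textbf{p})=\frac{1}{\lambda\tau}\sum_i f_i(1-e^{-\lambda\tau p_i})$ is exactly the paper's auxiliary ``fee service rate'' identity $\text{FSR}(\textbf{p})=\lambda R(\textbf{p})$. But there are two gaps, one minor and one fatal to the stated constant. The minor one is the numerator: you justify the value $\frac{1-e^{-\lambda\tau}}{\lambda\tau}\sum_{i=1}^{n}f_i$ by identifying the equilibrium with $\textbf{p}^{\rm top}$ via Theorems \ref{th:approximate} and \ref{th:NE_top_n}, but Theorem \ref{th:NE_top_n} holds only under condition (\ref{eq:NE_condiction}) and Theorem \ref{th:approximate} gives only an $\eta$-approximate equilibrium, whereas the statement concerns the equilibrium strategy in general. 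The paper sidesteps this: for \emph{any} symmetric equilibrium $\textbf{p}^*$, the best-response property gives $R(\textbf{p}^*)\geq R(\textbf{p}^{\rm top}|\textbf{p}^*)=\sum_{i=1}^{n}f_i\,r(p_i^*|\tau)\geq \frac{1-e^{-\lambda\tau}}{\lambda\tau}\sum_{i=1}^{n}f_i$, using only that $r(\cdot|\tau)$ is decreasing so $r(p_i^*)\geq r(1)$. That is a two-line repair you should adopt.

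The fatal gap is in the denominator: your claim that ``the truncated Poisson tail emerges'' from Jensen's inequality is not correct. Rearrangement plus Jensen applied blockwise to the concave map $p\mapsto 1-e^{-\lambda\tau p}$ produces constants of the form $J\bigl(1-e^{-\lambda\tau/J}\bigr)$ with $J=\lfloor m/n\rfloor$ (indeed, in the uniform-fee case the exact maximizer of $\sum_i f_i(1-e^{-\lambda\tau p_i})$ is $p_i=n/m$, giving $m(1-e^{-\lambda\tau n/m})$), not the partial Poisson sum. The Poisson structure in the paper's bound comes from a different argument entirely: the optimal fee throughput is bounded by an idealized \emph{collision-free, top-fee packing} of the Poisson($\lambda\tau$) number of blocks arriving in a window of length $\tau$ --- the $j$-th cohort of $n$ sorted fees (each cohort sum bounded by $\sum_{i=1}^{n}f_i$) is collected only on the event that at least $j$ blocks arrive, and summing the tail probabilities over $j\leq\lfloor m/n\rfloor$ yields $\lfloor m/n\rfloor-\sum_{k=0}^{\lfloor m/n\rfloor}\frac{(\lambda\tau)^k}{k!}e^{-\lambda\tau}$. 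To convert your Jensen bound into the stated one you would need $J\bigl(1-e^{-\lambda\tau/J}\bigr)\leq J-\sum_{k=0}^{J}\frac{(\lambda\tau)^k}{k!}e^{-\lambda\tau}$, which is false for $J=1$ (it reduces to $1\geq 1+\lambda\tau$) and unproven in general, so your final cancellation ``leaving exactly'' the claimed expression does not go through. The remedy is to drop the water-filling/Jensen maximization of $R(\textbf{p}|\textbf{p})$ altogether and bound the optimum by the stochastic counting argument above, as the paper does through $\text{FSR}_{\rm optimal}$; your fee-rearrangement step $f_{(j-1)n+t}\leq f_t$ survives intact as the step bounding each cohort by the top-$n$ fees.
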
 
	According to Theorem \ref{th:efficiency:fsr}, we have that $\lim_{\tau \rightarrow 0 } \text{Efficiency}(R) = 1$. 
	\ADD{This is because when $\tau \rightarrow 0$, the top $n$ strategy $\textbf{p}^{\text{top}}$ is the unique Nash equilibrium as shown in Theorem \ref{th:NE_top_n}. Besides, based on Lemma \ref{lemma:reward}, we have $\lim_{\Delta=\tau \rightarrow 0} R(\textbf{p}^{\text{top}}) = \sum_{i=1}^{n} f_i$, which implies that the miner can also obtain the highest transaction fee reward.}
	Thus TIPS can achieve a near-optimal miners' revenue, and therefore can efficiently break down the revenue dilemma.
	

	\subsection{Breaking Down the Throughput Dilemma}
	
	TIPS breaks the throughput dilemma by avoiding collision through quickly delivering a small signal to the network. 
	As mentioned in Section \ref{sec:bloom}, the size of the Bloom filter is much smaller compared to the block size, and it is independent to the transaction size.	
	In this case, we can increase the block size, while the broadcast time of the signal remains almost the same.
	\ADD{Since TIPS greatly shortens the effective network delay, according to Theorem \ref{th_tps}, a smaller network effective delay $\Delta$ contributes to a high block capacity utilization and thus a high system throughput.}
	This makes it possible to contain more transactions in the block (i.e., a larger $n$) without significantly sacrificing the block capacity utilization.
	Thus, TIPS can break down the throughput dilemma.

	From Theorem \ref{th_tps}, we know that the system throughput increases with a large block size $n$. To further observe the impact of TIPS on the system throughput, we consider the limit of throughput, that is, the maximum throughput that the system can achieve by increasing the block size, i.e., $\lim_{n \rightarrow \infty} TPS(\textbf{p}, n)$. We denote $\Delta(n)$ as the network propagation delay for a block of size $n$. As discussed in Section \ref{sec:approach:top_n}, with the introduction of TIPS, the equilibrium transaction inclusion strategy approach to the top $n$ strategy, i.e., $\textbf{p}^{\text{top}}$. Thus, we consider the limit of throughput when the miners take the top $n$ transaction inclusion strategy in the following lemma.
	\begin{lemma}\label{lemma:limit_tps}
		The limit of throughput of the DAG-based blockchain with the top $n$ transaction inclusion strategy $\textbf{p}^{\text{top}}$ is 
		\begin{equation}\nonumber
			\lim_{n \rightarrow \infty} TPS(\textbf{p}^{\text{top}}, n) = \lim_{n \rightarrow \infty} \frac{1}{\frac{\mathrm{d}\Delta(n)}{\mathrm{d}n}},
		\end{equation}
		where $\frac{\mathrm{d}\Delta(n)}{\mathrm{d}n}$ is the derivative of $\Delta(n)$.
	\end{lemma}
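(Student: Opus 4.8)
The plan is to evaluate the throughput expression from Theorem \ref{th_tps} at the top $n$ strategy, obtain a simple closed form in $n$ and $\Delta(n)$, and then extract the limit via L'Hôpital's rule. First I would substitute $\textbf{p}^{\text{top}}$ into the block capacity utilization formula. For the top $n$ transactions we have $p_i = 1$, so the term $(1 - p_i)e^{-\lambda \Delta p_i} = 0$; for the remaining $m - n$ transactions we have $p_i = 0$, so the term equals $1$. Hence $\sum_{i=1}^{m}(1 - p_i)e^{-\lambda \Delta p_i} = m - n$, and the numerator $m - (m - n) = n$ collapses, giving
\begin{equation}\nonumber
\text{U}(\textbf{p}^{\text{top}}) = \frac{n}{n(\lambda \Delta + 1)} = \frac{1}{\lambda \Delta + 1}, \qquad TPS(\textbf{p}^{\text{top}}) = \frac{\lambda n}{\lambda \Delta + 1}.
\end{equation}

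Next I would make the dependence on block size explicit by writing $\Delta = \Delta(n)$, so that $TPS(\textbf{p}^{\text{top}}, n) = \frac{\lambda n}{\lambda \Delta(n) + 1}$, and study the limit as $n \to \infty$. Since enlarging the block size strictly increases the propagation delay, $\Delta(n) \to \infty$, and both numerator and denominator diverge, yielding an $\infty/\infty$ indeterminate form. Applying L'Hôpital's rule with respect to $n$, the derivative of the numerator $\lambda n$ is $\lambda$, while the derivative of the denominator $\lambda \Delta(n) + 1$ is $\lambda \frac{\mathrm{d}\Delta(n)}{\mathrm{d}n}$; the constant factors $\lambda$ cancel, leaving
\begin{equation}\nonumber
\lim_{n \to \infty} TPS(\textbf{p}^{\text{top}}, n) = \lim_{n \to \infty} \frac{\lambda}{\lambda \frac{\mathrm{d}\Delta(n)}{\mathrm{d}n}} = \lim_{n \to \infty} \frac{1}{\frac{\mathrm{d}\Delta(n)}{\mathrm{d}n}},
\end{equation}
which is exactly the claimed identity.

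The bulk of the argument is a routine substitution; the step requiring genuine care is the invocation of L'Hôpital's rule, which is where I expect the main obstacle to lie. Its validity presupposes that $\Delta(n)$ is differentiable, that the form is indeed indeterminate (i.e. $\Delta(n) \to \infty$, justified by the monotone growth of propagation delay in block size assumed throughout the paper), and that the limit of the ratio of derivatives $\lim_{n \to \infty} 1/\tfrac{\mathrm{d}\Delta(n)}{\mathrm{d}n}$ exists. I would therefore state these regularity conditions on $\Delta(n)$ explicitly as the hypotheses under which the identity holds, noting that even in the borderline case where $\Delta(n)$ grows sublinearly so that $TPS \to \infty$, the equality remains consistent since $\tfrac{\mathrm{d}\Delta(n)}{\mathrm{d}n} \to 0$ forces the right-hand side to diverge as well.
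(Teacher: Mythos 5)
Your proof is correct and follows essentially the same route as the paper's: substitute $\textbf{p}^{\text{top}}$ into the formula of Theorem \ref{th_tps} to collapse the throughput to $\frac{\lambda n}{\lambda \Delta(n) + 1}$, then apply L'H\^opital's rule in $n$. Your version is in fact slightly more careful than the paper's, since you state the regularity hypotheses on $\Delta(n)$ (differentiability, $\Delta(n)\to\infty$, existence of the derivative-ratio limit) that the paper invokes implicitly, and your algebra avoids a stray factor of $\lambda$ that appears as a typo inside the sum in the paper's displayed computation.
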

	In Lemma \ref{lemma:limit_tps}, $\frac{\mathrm{d}\Delta(n)}{\mathrm{d}n}$ denotes the marginal network propagation time due to an extra transaction in the block.
	Empirical results show that the marginal network propagation time for each KB of data is a constant, e.g., 80ms for each extra KB of Bitcoin block \cite{decker2013information}.
	We compared the limit of throughput of TIPS and the standard protocol below.
	\begin{equation}\label{eq:limit:tps}
		\begin{aligned}
			\frac{\lim_{n \rightarrow \infty} TPS^{\text{TIPS}}(\textbf{p}^{\text{top}}, n)}{\lim_{n \rightarrow \infty} TPS^{\text{Standard}}(\textbf{p}^{\text{top}}, n)}&=\!\frac{\lim_{n \rightarrow \infty} \frac{1}{\frac{\mathrm{d}\tau(n)}{\mathrm{d}n}}}{\lim_{n \rightarrow \infty} \frac{1}{\frac{\mathrm{d}\Delta(n)}{\mathrm{d}n}}}\!\! =\!\! \lim_{n \rightarrow \infty}\frac{\frac{\mathrm{d}\Delta(n)}{\mathrm{d}n}}{\frac{\mathrm{d}\tau(n)}{\mathrm{d}n}} \\
			&= \frac{\text{transaction size}}{\varsigma\text{ bits}},
		\end{aligned}
	\end{equation}
	where $\varsigma$ denotes the number of bits per transaction in the Bloom filter. In (\ref{eq:limit:tps}), we see that the throughput ratio between the standard protocol and TIPS is essentially the ratio between the size of the needed broadcast for a marginal transaction. 
	Thus, the standard protocol is limited by the broadcast of the additional transaction, while TIPS is limited only by the broadcast of its signal (with a much smaller data size). Practically, a Bloom filter with $\varsigma=b/n = 8$ can achieve a false positive probability of $2.17\%$, which is practical enough in TIPS. While the average transaction size in Bitcoin is 500 KB. Therefore, theoretically, the ratio of the limit of throughput in TIPS and the standard protocol can be as large as $5\times 10^5$.
	This huge ratio shows the potential of TIPS in breaking the throughput dilemma.
	
		\begin{figure}[!th]
		\centering
		\subfigure[TPS in TIPS and Standard Protocol]{
			\includegraphics[width=2.62in,height=1.3in]{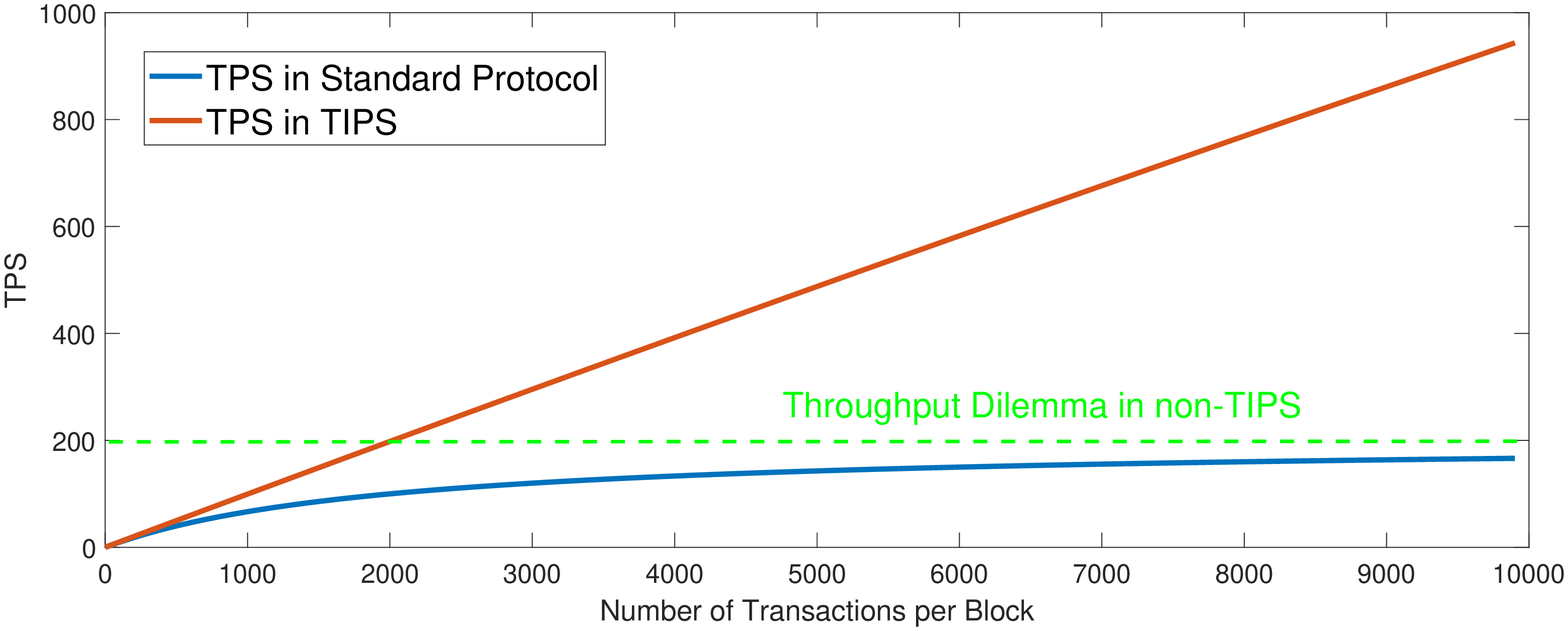}
		}
		\subfigure[Utilization in TIPS and Standard Protocol]{
			\includegraphics[width=2.62in,height=1.3in]{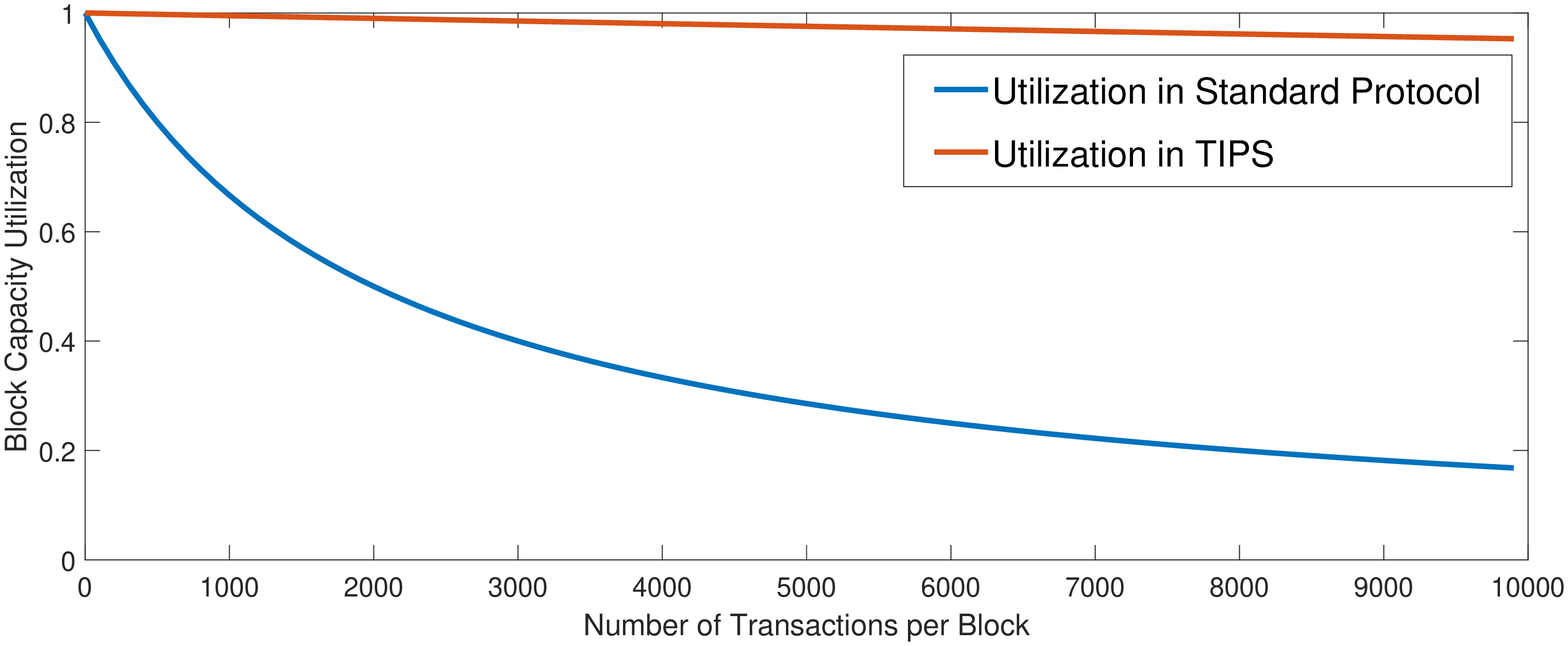}
		}
		\caption{Breaking down the throughput dilemma with TIPS}
		\label{fig:tps_dilemma}
	\end{figure}
	
	\ADD{Figure \ref{fig:tps_dilemma} shows that increasing the block size will drastically reduce the block capacity utilization due to the increasing transaction inclusion collision and that the low utilization further limits the system throughput. 
	However, by broadcasting the almost accurate signal from the most recent header in a short time, TIPS efficiently avoids transaction inclusion collision, contributing to a high utilization, and thus greatly boosts the system throughput.
	}
	
	To further demonstrate the efficiency of TIPS under the throughput dilemma, we investigate the performance of the equilibrium strategy of the transaction inclusion game with TIPS as follows.
	\begin{definition}
		The efficiency of the  equilibrium strategy of the transaction inclusion game in the DAG-based blockchain under the throughput dilemma is defined as the ratio of the TPS under the equilibrium strategy of the transaction inclusion game and the highest TPS achieved by any transaction inclusion strategy, which is shown as follows:
		\begin{equation}
			\rm Efficiency(TPS) = \frac{\rm TPS \ of \ Equilibrium}{\rm TPS \  of \ optimal \ strategy} .
		\end{equation}
	\end{definition}
	
	Then we have the following results.
	\begin{theorem}\label{th:efficiency:tps}
		The efficiency of the equilibrium strategy of the transaction inclusion game in the DAG-based blockchain with TIPS under the throughput dilemma is
		\begin{equation}\nonumber
			\begin{aligned}
				&  \text{Efficiency}(TPS) \geq  \frac{(1 - e^{-\lambda\tau})}{\left( \lfloor \frac{m}{n} \rfloor - \sum_{k=0}^{\lfloor \frac{m}{n} \rfloor} \frac{(\lambda \tau)^k}{k!} e^{-\lambda\tau} \right)} .
			\end{aligned}
		\end{equation}
		
	\end{theorem}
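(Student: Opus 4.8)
The plan is to compute the two throughputs in the efficiency ratio and combine them, following the template of Theorem~\ref{th:efficiency:fsr}, whose bound is identical. This parallelism is natural because Theorem~\ref{th_tps} gives $\text{TPS}(\textbf{p}) = \lambda n\,\text{U}(\textbf{p})$, so the common factor $\lambda n$ cancels and $\text{Efficiency}(TPS)$ reduces to the ratio of block capacity utilizations, $\text{U}(\textbf{p}^{\rm top})/\max_{\textbf{p}\in\mathbb{P}}\text{U}(\textbf{p})$. First I would fix the numerator. By Theorem~\ref{th:approximate} and Theorem~\ref{th:NE_top_n} the equilibrium strategy under TIPS is the top $n$ strategy $\textbf{p}^{\rm top}$, and by Theorem~\ref{th:NE_signal} the effective delay appearing in the collision term is the signal delay $\tau$ rather than $\Delta$. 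Substituting $\textbf{p}^{\rm top}$ (with $p_i=1$ for $i\le n$ and $p_i=0$ otherwise) and $\Delta=\tau$ into Theorem~\ref{th_tps} collapses $\sum_{i=1}^{m}(1-p_i)e^{-\lambda\tau p_i}$ to $m-n$, since the $n$ included transactions contribute $0$ and the $m-n$ excluded ones contribute $1$ each; hence the numerator of $\text{U}$ is $n$ and
\begin{equation}\nonumber
\text{U}(\textbf{p}^{\rm top}) = \frac{1}{\lambda\tau+1}, \qquad \text{TPS}(\textbf{p}^{\rm top}) = \frac{\lambda n}{\lambda\tau+1}.
\end{equation}

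Next I would bound the denominator, i.e.\ the optimal throughput. Maximizing $\text{U}$ over $\mathbb{P}$ is equivalent to minimizing $\sum_{i=1}^{m}(1-p_i)e^{-\lambda\tau p_i}$ subject to $\sum_i p_i=n$ and $0\le p_i\le 1$; since $p\mapsto(1-p)e^{-\lambda\tau p}$ is strictly convex on $[0,1]$, Jensen's inequality identifies the uniform point $p_i=n/m$ as the exact minimizer. To recover the stated closed form, however, I would instead apply a coarser genie upper bound on the optimal throughput, which yields a valid (weaker) lower bound on efficiency: within a signaling window of length $\tau$ the number of concurrently produced blocks is Poisson with mean $\lambda\tau$, while the pool admits at most $\lfloor m/n\rfloor$ mutually disjoint, collision-free blocks of size $n$. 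Capping the Poisson block count at $\lfloor m/n\rfloor$ is precisely what produces the truncated term $\sum_{k=0}^{\lfloor m/n\rfloor}\frac{(\lambda\tau)^k}{k!}e^{-\lambda\tau}$, and the per-block confirmation/splitting factor supplies the $(1-e^{-\lambda\tau})$ in the numerator. Assembling the ratio and simplifying then gives
\begin{equation}\nonumber
\text{Efficiency}(TPS) \ge \frac{1-e^{-\lambda\tau}}{\lfloor \frac{m}{n}\rfloor - \sum_{k=0}^{\lfloor \frac{m}{n}\rfloor}\frac{(\lambda\tau)^k}{k!}e^{-\lambda\tau}},
\end{equation}
where every manipulation mirrors the revenue argument of Theorem~\ref{th:efficiency:fsr} once $\text{TPS}=\lambda n\,\text{U}$ is substituted, which is exactly why the two bounds coincide.

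The main obstacle is the denominator. One must justify the genie upper bound on the optimal-strategy throughput rigorously, confirm that it dominates the true optimum (so that the resulting efficiency estimate is a genuine lower bound), and verify that the argument is valid for an arbitrary fee distribution rather than only the homogeneous case. The integer batching introduces a further subtlety: the $\lfloor m/n\rfloor$ disjoint blocks leave a remainder of $m-n\lfloor m/n\rfloor$ transactions that must be accounted for so that the truncated Poisson CDF emerges cleanly. By comparison, the numerator computation for $\textbf{p}^{\rm top}$ and the reduction through $\text{TPS}=\lambda n\,\text{U}$ are routine, so I would invest the bulk of the effort in making the genie counting argument precise.
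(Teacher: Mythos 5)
Your numerator computation is fine and matches the paper: substituting $\textbf{p}^{\rm top}$ with effective delay $\tau$ into Theorem \ref{th_tps} gives $\text{U}(\textbf{p}^{\rm top})=1/(\lambda\tau+1)$ and $\text{TPS}(\textbf{p}^{\rm top})=\lambda n/(\lambda\tau+1)$. But the paper's actual proof of this theorem is far simpler than what you attempt, and does \emph{not} use a genie/truncated-Poisson argument at all: it bounds $\text{TPS}_{\rm optimal}\leq \lambda n$ via the trivial observation $\text{U}\leq 1$, lower-bounds the equilibrium throughput by $\text{TPS}(\textbf{p}^{\rm top})=\lambda n/(\lambda\tau+1)$, and concludes $\text{Efficiency}(TPS)\geq 1/(\lambda\tau+1)$. (Note the internal discrepancy: the appendix derives $1/(\lambda\tau+1)$, not the floor/truncated-Poisson expression in the theorem statement, which appears to be transcribed from the FSR bound of Theorem \ref{th:efficiency:fsr}. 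Your instinct that the bound ``mirrors'' Theorem \ref{th:efficiency:fsr} correctly diagnoses where the stated expression comes from, but it is not how the paper proves this theorem, and the two bounds are not comparable: $1/(\lambda\tau+1)$ exceeds the stated bound for small $\lambda\tau$ but falls below it for large $\lambda\tau$.)

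The genuine gap in your proposal is the final assembly step, which cannot work as sketched. In Theorem \ref{th:efficiency:fsr}, the factor $(1-e^{-\lambda\tau})$ comes from the \emph{revenue} side, namely $r(1)=\frac{1-e^{-\lambda\tau}}{\lambda\tau}$ in $R(\textbf{p}^{\rm top})=r(1)\sum_{i=1}^{n}f_i$, together with $\text{FSR}=\lambda R$. The TPS functional of Theorem \ref{th_tps} has a structurally different form: $\text{TPS}(\textbf{p}^{\rm top})=\lambda n/(\lambda\tau+1)$ contains no $(1-e^{-\lambda\tau})$ factor, so dividing it by any genie bound of the shape $\frac{n}{\tau}\left( \lfloor \frac{m}{n} \rfloor - \sum_{k=0}^{\lfloor \frac{m}{n} \rfloor} \frac{(\lambda \tau)^k}{k!} e^{-\lambda\tau} \right)$ yields a factor $\frac{\lambda\tau}{\lambda\tau+1}$ in place of $1-e^{-\lambda\tau}$, and these differ (e.g.\ at $\lambda\tau=1$: $0.5$ versus $0.632$). ``Assembling the ratio and simplifying'' is thus exactly the step that fails; to land on the stated constant you would have to re-derive the equilibrium throughput with the windowed accounting of Proposition \ref{prop:fsr} at unit fees, i.e.\ essentially reprove Theorem \ref{th:efficiency:fsr} with $f_i\equiv 1$, rather than use Theorem \ref{th_tps}. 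A further warning sign you partially noticed yourself: your (correct) Jensen observation shows the uniform strategy $\textbf{p}^{\rm rand}$ \emph{maximizes} $\text{U}$, so the throughput-optimal strategy is not $\textbf{p}^{\rm top}$, and the efficiency ratio does not reduce to a clean ratio of utilizations at $\textbf{p}^{\rm top}$; you correctly flagged the genie denominator as an obstacle, but the numerator-side mismatch is equally fatal as written.
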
 
	\ADD{According to Theorem \ref{th:efficiency:tps}, we have that $\lim_{\tau \rightarrow 0 } \text{Efficiency}(TPS) = 1$, 
		which is consistent with Theorem \ref{th_tps}. In Theorem \ref{th_tps}, we have $\lim_{\Delta=\tau \rightarrow 0} U(\textbf{p}) = 1$, which implies that TIPS can achieve an extremely high block capacity utilization.}
	Thus, TIPS can achieve near-optimal TPS, and therefore can efficiently break down the throughput dilemma.

	\section{Experiment Results}\label{sec:experiment}
	
	
	In this section, we conduct experiments to demonstrate the performance of TIPS and validate our analysis.
	
	\ADD{We develop a DAG-based blockchain simulator in Python using SimPy \cite{matloff2008introduction}. We implement the basic inclusive protocol \cite{lewenberg2015inclusive} in the simulator, which is one of the most famous DAG-based blockchain protocols.
	The implementation is sufficiently representative because TIPS is a robust ``add-on'' design, which can be applied to most of the current DAG-based blockchain protocols, such as Conflux \cite{li2018scaling} and CDAG \cite{gupta2019cdag}. We have realized all designed protocols in TIPS in the simulator, including the block propagation model, the construction and validation of the Bloom filter, the operation of maintaining the expected value of the transactions in the transaction pool, and different transaction inclusion strategies. We use a Python package ``pybloom-live'' \cite{onlineBloom} as the implementation of the Bloom filter data structure in the TIPS. There are 10 homogeneous miners in the simulators, which are connected to a P2P network. Each miner will maintain a local version of the transaction pool and will select some transactions in the pool based on the given transaction inclusion strategy. With this simulator, we will compare the system performance of the standard inclusive protocol (without TIPS), and protocol with TIPS under different transaction inclusion strategies.}
	
			\begin{figure*}[!t]
		\centering
		\begin{minipage}[t]{0.32\linewidth}
			\centering
			\includegraphics[width=0.95\linewidth,height=0.7\linewidth]{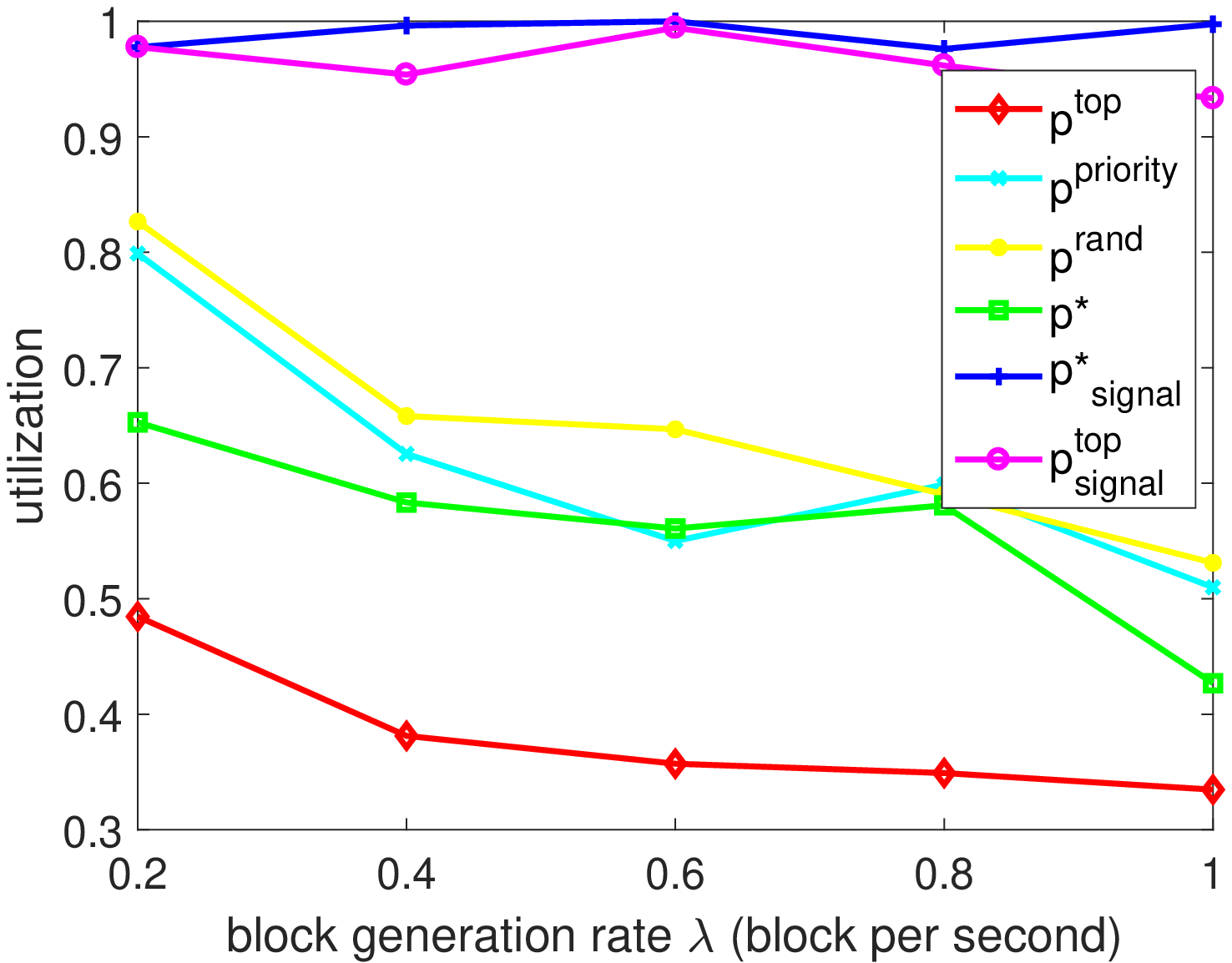}
			\caption{Utilization of different transaction inclusion protocols}
			\label{fig:exp.performance.utility}
		\end{minipage}
		\hfill
		\begin{minipage}[t]{0.32\linewidth}
			\centering
			\includegraphics[width=0.95\linewidth,height=0.7\linewidth]{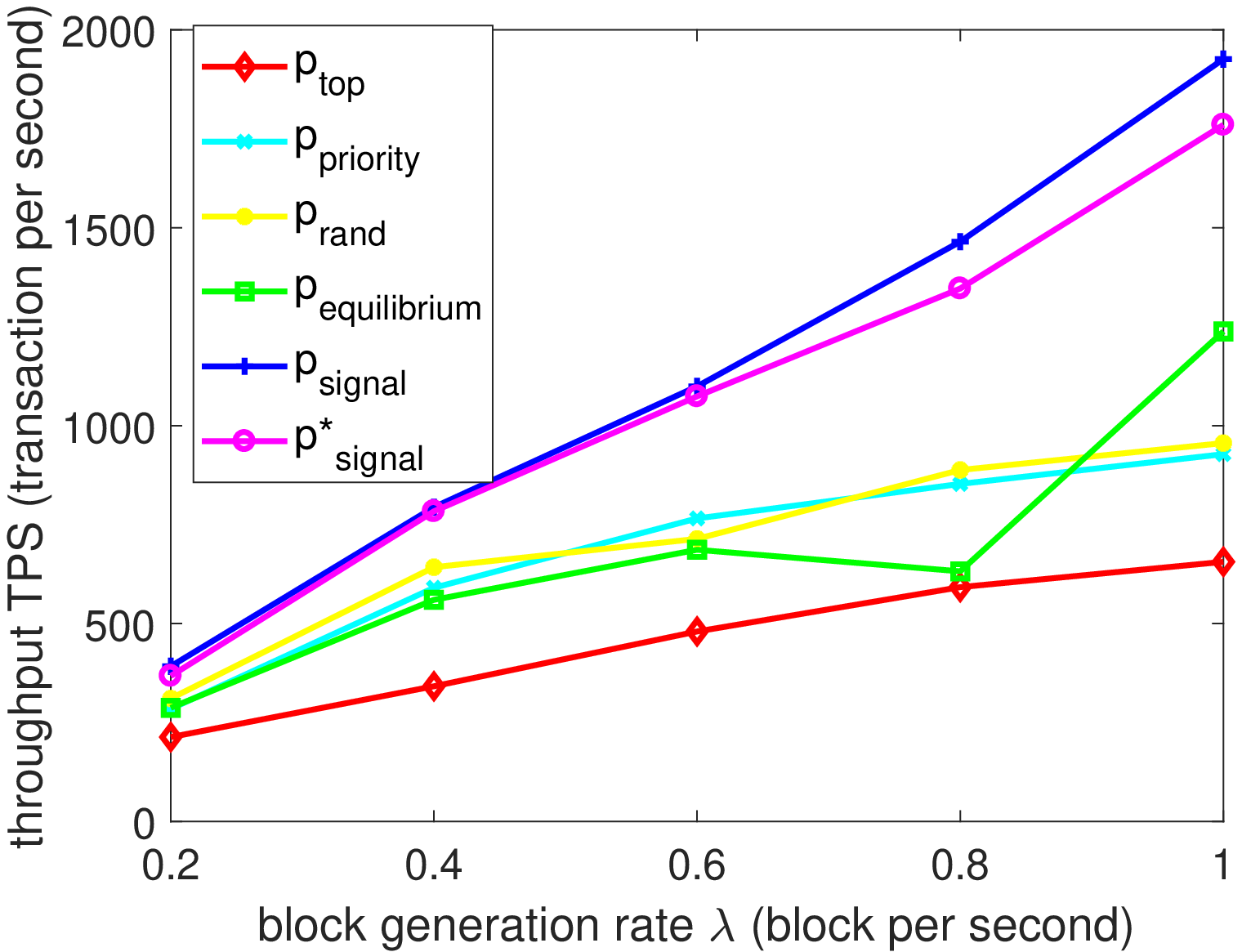}
			\caption{TPS of different transaction inclusion protocols}
			\label{fig:exp.performance.tps}
		\end{minipage}
		\hfill
		\begin{minipage}[t]{0.32\linewidth}
			\centering
			\includegraphics[width=0.95\linewidth,height=0.7\linewidth]{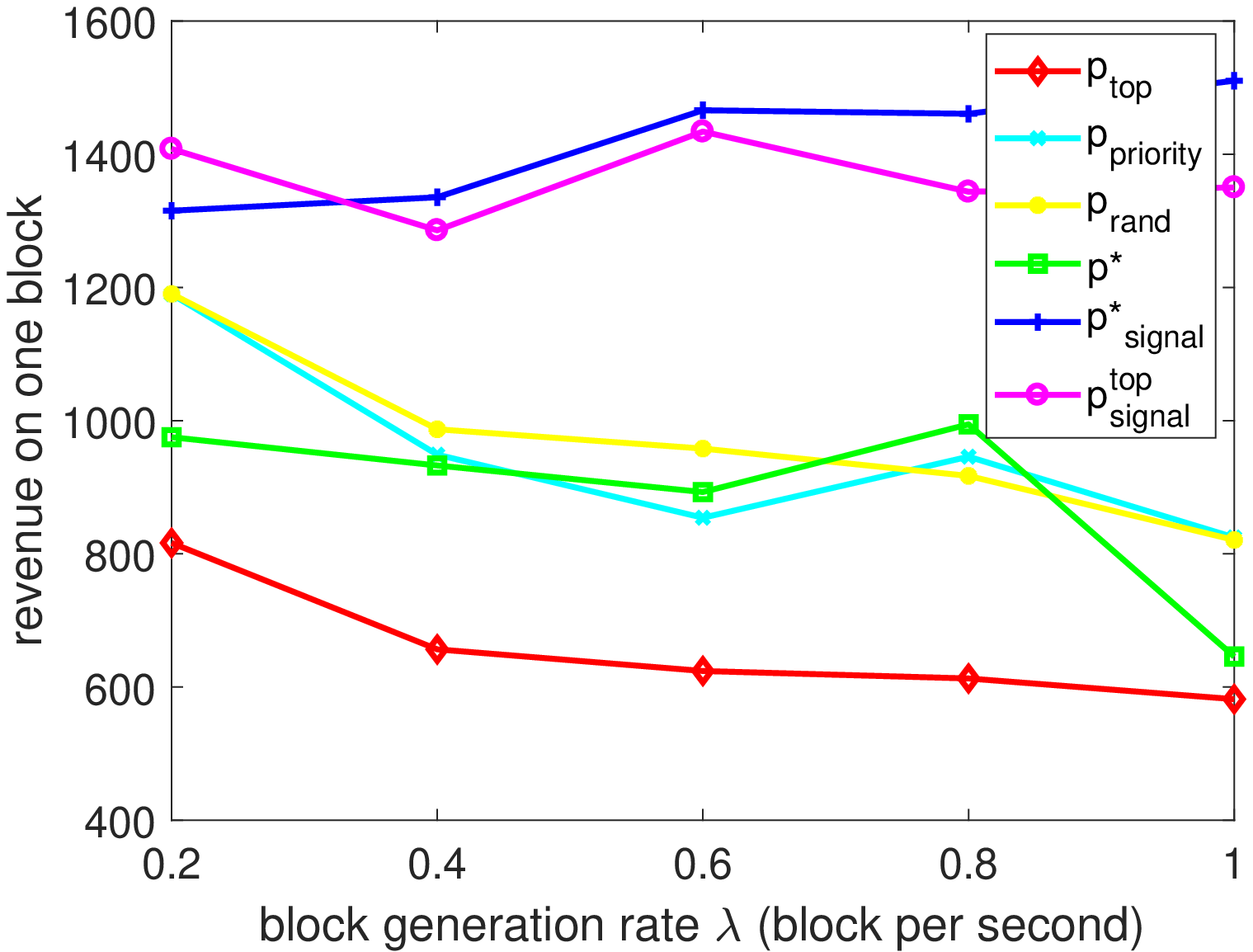}
			\caption{Miners' revenue of different transaction inclusion protocols}
			\label{fig:exp.performance.fsr}
		\end{minipage}
	\end{figure*}

	\begin{figure*}[!t]
		\centering
		\begin{minipage}[t]{0.48\linewidth}
			\centering
			\includegraphics[width=3.3in,height=1.2in]{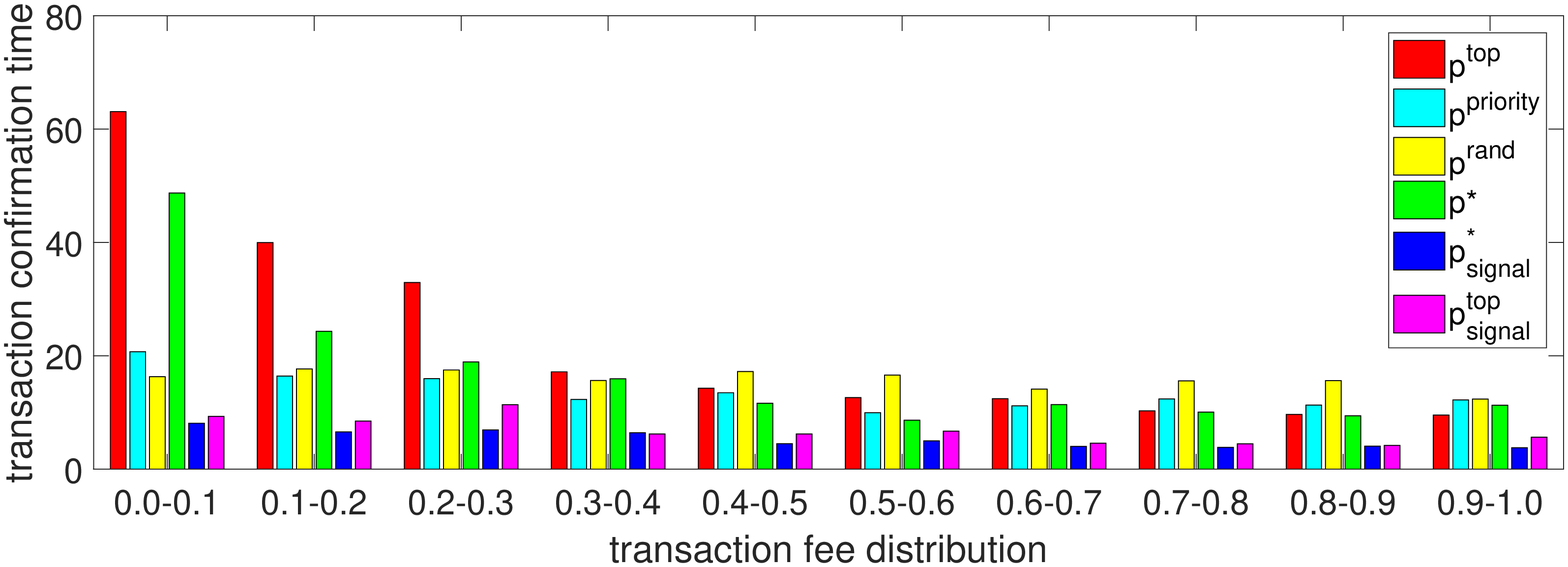}
			\caption{Average transaction confirmation time under different transaction inclusion protocols} 
			\label{fig:exp.performance.waiting_time}
		\end{minipage}
		\hfill
		\begin{minipage}[t]{0.48\linewidth}
			\includegraphics[width=3.3in,height=1.2in]{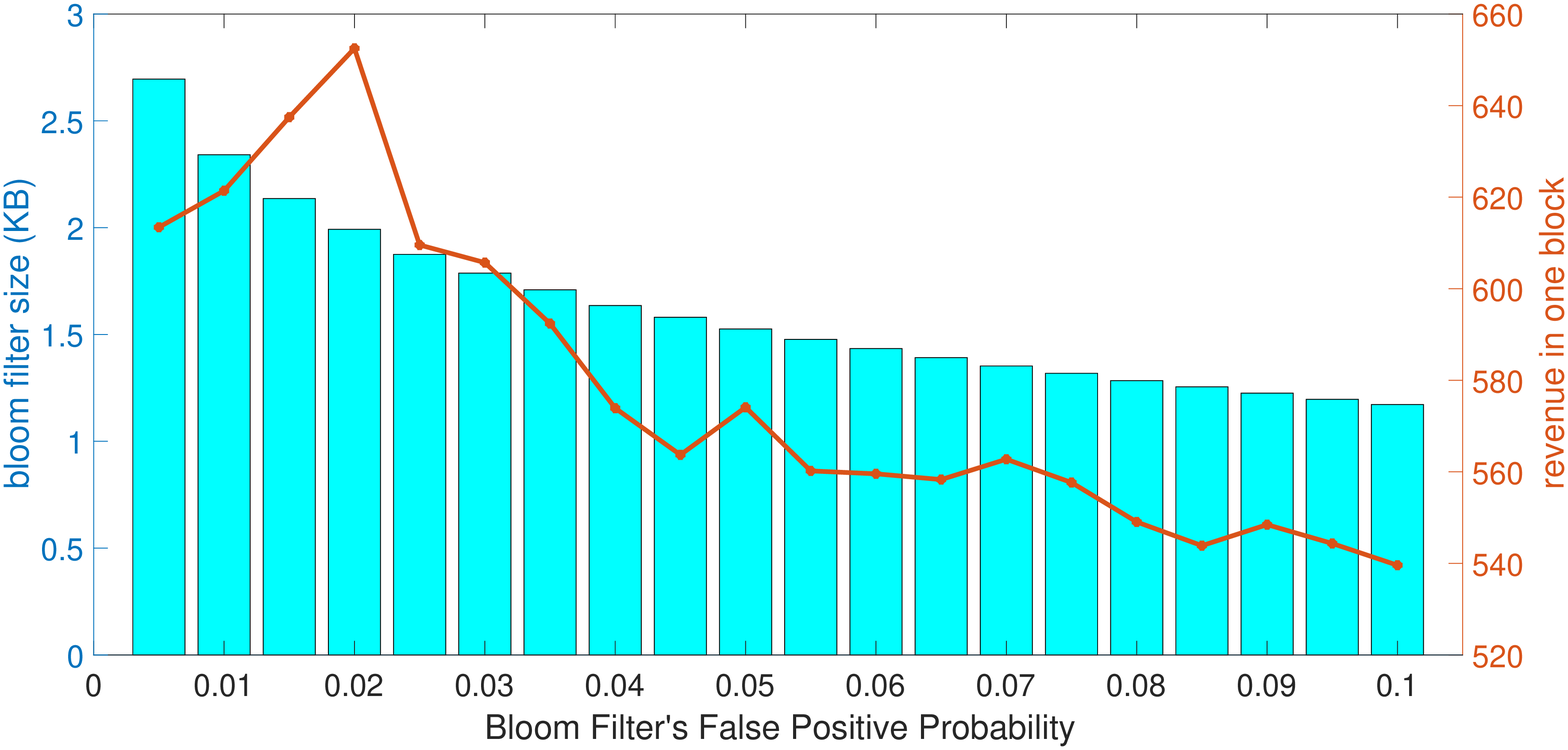}
			\caption{Experimental results with different Bloom filters' false positive probabilities}
			\label{fig:bf_size_fsr}
		\end{minipage}
	\end{figure*}
	The experimental configuration is as follows.
	We set the block size to be 1MB, which is the current block size limitation in Bitcoin. With the average transaction size being 500 bytes, we put 2000 transactions in one block, i.e., $n = 2000$. Besides, we assume the size of the transaction pool to be $m = 10000$. The propagation delay for the whole block is a random variable following the normal distribution with the expectation as $\Delta = 10$, and the propagation delay for the signal is a random variable following the normal distribution with the expectation as $\tau = 0.1$. The block generation rate of the DAG-based blockchain system $\lambda$ ranges from 0.1 to 1. The arrival of the transaction follows the Poisson process with the transaction fee drawn uniformly from $[0,1]$.
	
	In Figure \ref{fig:exp.performance.utility} to Figure \ref{fig:exp.performance.waiting_time}, results of $p^{\rm top}$, $p^{\rm priority}$, $p^{\rm rand}$ and $p^*$ correspond to the cases when all miners adopt strategies $\textbf{p}^{\rm top}$,$\textbf{p}^{\rm priority}$, $\textbf{p}^{\rm rand}$ and the equilibrium strategy in the standard protocol, respectively. 
	Meanwhile, results of $p^{*}_{\rm signal}$ and $p^{\rm top}_{\rm signal}$ represent the cases when all miners adopt the equilibrium strategy and the $\textbf{p}^{\rm top}$ strategy in TIPS, respectively.
	
	From Figure \ref{fig:exp.performance.utility}, we can find that TIPS always achieves high utilization and the utilization in TIPS is not sensitive to the increase of the block generation rate. 
	\ADD{This is because in the standard protocol, a higher block generation rate means that more blocks will be generated during the network propagation delay $\Delta$, leading to more transaction inclusion collisions and lower block capacity utilization. This phenomenon is also consistent with Theorem 3, where the utilization $U$ is monotonically decreasing with the generation rate $\lambda$ when other parameters are fixed. On the contrary, since TIPS drastically lowers the effective network propagation delay, it efficiently avoids the transaction inclusion collision and greatly improves the utilization.}

	Figure \ref{fig:exp.performance.tps} shows the system throughput (TPS) of different transaction inclusion protocols, from which we find that the throughput increases with the block generation rate.
	\ADD{This is consistent with Theorem \ref{th_tps}, i.e., the system throughput (TPS) is monotonically increasing with the block generation rate $\lambda$. }
	Besides, from Figure \ref{fig:exp.performance.tps}, we can also find that TIPS can always achieve a higher system throughput compared to the standard protocol, which demonstrates that TIPS effectively breaks down the throughput dilemma. \ADD{This is because TIPS greatly improves the block capacity utilization as shown in Figure \ref{fig:exp.performance.utility}, and therefore can process more distinct transactions than the standard protocol.}

	Figure \ref{fig:exp.performance.fsr} shows that TIPS can achieve a significantly higher miners' revenue compared to the standard protocol.
	\ADD{This is because transaction inclusion collision will split the transaction fee reward and thus lower the miners' revenue. Besides, the miners' collision-avoiding transaction inclusion strategies in the standard protocol tend to include cheap transactions, which also hurts miners' revenue. In the meantime, TIPS avoids the transaction inclusion collision and encourages miners to include the transactions with the highest fees, which improves the miners' revenue.} 
	This validates the claim that TIPS can effectively break down the revenue dilemma.
	
	
	Figure \ref{fig:exp.performance.waiting_time} shows the average transaction confirmation time for transactions with different transaction fees. Firstly, the average transaction confirmation time in  TIPS is significantly shorter than that of other protocols. 
	\ADD{This is because TIPS can achieve a higher system throughput, which means that more transactions can be processed in a short time.}
	Secondly,  TIPS guarantees that the transactions with higher transaction fees can be  confirmed in a shorter time,
	\ADD{since TIPS encourages the miners to include the transactions with the highest fees. The above two properties of TIPS together guarantee a good user experience.}

	Besides, we further investigate the impact of the Bloom filter's false positive probability, which is shown in Figure \ref{fig:bf_size_fsr}. We can find that a small Bloom filter has a high probability of false positive, and can be broadcast to the network in a short time, but will lead to the following drawbacks: (1) When the false positive probability is high, there are a number of transactions that hit the Bloom filter but are not included in the block. These transactions need to wait for a long time to be included in a block, so as some transactions with high transaction fees, which will degrade the system performance with lower miners' revenue; (2) When the false positive probability is high, the expected value of the transactions with high transaction fee may be still high, and therefore, other miners are motivated to include these transactions, which will increase the transaction inclusion collision and degrade the system throughput. This is the reason why the miners' revenue decreases gradually with the increase of the Bloom filter's false positive probability.
	
	On the other hand, a Bloom filter with a low false positive probability will correspondingly have a large size, leading to a long time to broadcast the signal. Thus, it may have a higher probability of transaction inclusion collision, which will degrade the system throughput. 
	Therefore, it is important to figure out the optimal configuration of the Bloom filter. As shown in Figure \ref{fig:bf_size_fsr}, the Bloom filter with false positive probability as ``0.02'' can achieve the good performance. 
	
	\section{Security Discussions}\label{sec:security}
	
	In this section, we consider several possible security threats caused by  TIPS in DAG-based blockchain and show that TIPS can maintain the system security in long term. Based on the aforementioned design, we know that the miner can not construct a misleading signal easily since each signal contains a PoW in the block header. We are going to analyze two possible security threats including the denial of service attack and the delay of service attack.

	\subsection{Denial of Service  Attack}
	
	There are two types of denial of service attacks in  TIPS. 
	
	\noindent\textbf{BDoS} \cite{mirkin2020bdos}. BDoS is an incentive attack where an adversary can manipulate miners' incentives by broadcasting the block header without publishing the block. However, the ``BDoS'' attack is infeasible in  TIPS due to the following reason. Miners can still obtain a high reward without referring to the latest block. Furthermore, there is a hard-coded timeout for the block header, which can further defend against the BDoS attack.
	
	\noindent\textbf{Signal flood}.
	Another possible attack is that the attacker can broadcast a signal with lots of bits of Bloom filter set to 1 to lower the expected value of transactions, which can reduce the miners' expected reward (even less than the mining cost), and  motivate other miners to stop mining. 
	
	Luckily, we can defend against this attack efficiently by checking the number of bits set to 1 in the Bloom filter and rejecting the Bloom filter with too many bits set to 1 when receiving the block header. 
	To see this, we propose a detection indicator of the Bloom filter, which is analyzed as follows.
	After $n$ transactions have been added to the Bloom filter, let $q$ be the fraction of the $b$ bits that are set to 0, i.e., the number of bits still set to 0 is $qb$. The probability that all $h$ hash functions find that their bits are set to 1 is $(1 - q)^h$. Therefore, the expectation of $q$ is 
	\begin{equation}\nonumber
		\mathbb{E}(q) = \left( 1 - \frac{1}{b} \right)^{hn}.
	\end{equation}
	According to \cite{mitzenmacher2017probability}, we have 
	\begin{equation}\nonumber
		P\left(  \mathbb{E}(q) - q \geq \frac{\xi}{m}  \right) \leq  \exp(- 2\xi^2 / kn).
	\end{equation}
	A Bloom filter with too many bits set to 1 will be rejected. Let $\eta$ be the probability of rejecting a valid Bloom filter. Let $X$ be the number of bits that are set to 1 in the Bloom filter. Then the Bloom filter will be rejected if the following condition holds:
	\begin{equation}\label{eq:bloomfilter_too_many_one}
		X \geq b -  b\left( 1 - \frac{1}{b} \right)^{hn} + \sqrt{-\frac{1}{2} hn \ln \eta}  .
	\end{equation}
	
	As an example, for a Bloom filter with $b=20000$ bits, $k=5$ hash functions and $n=2000$ transactions included, if the probability of rejecting a valid Bloom filter is $0.01\%$, a Bloom filter will be rejected if the number of bits that are set to 1 is greater than 9535, while the expectation of the number of bits that are set to 1 for a valid Bloom filter is 7869, which indicates the high sensitivity and accuracy of detection  indicator (\ref{eq:bloomfilter_too_many_one}). 
	Besides, this extra validation is computationally cheap. Condition (\ref{eq:bloomfilter_too_many_one}) can be pre-computed and store in the memory. Even with simple implementation on a PC, the time to check the validation of Bloom filter is 1 $\mu s$ (i.e., 0.99ms per 1000 Bloom filters on a single core). Therefore, TIPS can efficiently defend against this attack.

	\subsection{Delay of Service Attack }
	
	The traditional delay of service attack in DAG-based blockchain has been discussed in \cite{lewenberg2015inclusive}. Different from the traditional attack, there exists another delay of service attack in TIPS.
	
	In TIPS, the attacker can delay the successful record of a transaction $tx_i$ by continuously mining a valid block that includes this transaction, but only broadcast the signal without the whole block. This attack will motivate other miners to avoid including the transaction $tx_i$ in their blocks because the  signal sent by the attacker will lower the expected value of transaction $tx_i$. However, once the signal is expired, other miners will have the motivation to include the transaction $tx_i$ in the block again. Once other miners include the transaction $tx_i$, this delay of service attack is terminated. 

	To achieve tractable analysis, we consider the following scenario. 
	The expiration time for a block header is $T$. The fraction of the computing power of the attacker is $\alpha$. 
	To delay the transaction as long as possible, the attacker needs to keep mining new blocks containing the same transaction before the signal is expired.
	Denote the expected delay time after the attacker initiates this attack as $\mathbb{E}(D)$. 
	If the attacker mines a new block at time $t < T$ before the previous signal is expired, he can delay the transaction with extra $\mathbb{E}(D)$ time. Otherwise, he can only delay at most the expiration time $T$.
	Then we have
	\begin{equation}\nonumber
		\begin{aligned}
			\mathbb{E}(D) &= \int_{0}^{T} (t + \mathbb{E}(D) ) \alpha \lambda e^{- \alpha \lambda t} dt  + \int_{T}^{\infty} T \alpha \lambda e^{- \alpha \lambda t} dt .
		\end{aligned}
	\end{equation}
	Solving the above equation, we have $\mathbb{E}(D) = \frac{ \left( e^{\alpha \lambda T} - 1 \right) }{ \alpha \lambda  }.$
	Therefore, the expected delay time  is
	\begin{equation}\nonumber
		\text{Delay}(\alpha) = \alpha \mathbb{E}(D) = \frac{1}{\lambda} \left( e^{\alpha \lambda T} - 1 \right),
	\end{equation}
	which implies that the expected delay time for this attack is limited, because generally the expiration time for the block header $T$ is small.
	From the attacker's perspective, the attacker can never profit from this delay of service attack. Besides, the longer the attack goes on, the greater its cost. 
	From the user's perspective, if a user finds his transaction delayed by this attack, he can increase the transaction fee using the replace-by-fee mechanism \cite{elrom2019bitcoin}. A higher transaction fee will motivate other miners to include this transaction and compete against the attacker. Therefore, we can conclude that this attack has an insignificant impact and can be defended against efficiently.

	\section{Related Work}\label{sec:related}
	
	
	\subsection{Transaction Inclusion Protocol}
	
	In the inclusive  protocols \cite{lewenberg2015inclusive}, the authors model the transaction inclusion as a non-cooperative game with imperfect information, and propose a myopic strategy, which can achieve both high throughput and high quality of service levels.
	Conflux \cite{li2018scaling} models the transaction inclusion as a cooperative game and distributes the transaction fee based on Shapley value \cite{roth1988shapley}. 
	The existing transaction inclusion protocols can be considered as the compromise solution facing the dilemmas in DAG-based blockchain. Along a different line, we propose the novel protocol, TIPS, which can make a breakthrough in these dilemmas and achieve near-optimal performance.
	
	\subsection{Header First Propagation}
	
	In the current blockchain systems like Bitcoin and Ethereum, miners generally broadcast the block header before broadcasting the whole block, which can help to avoid the repeated propagation for the same block and thus speed up the propagation process \cite{decker2013information}. However, this header first propagation might lead to the SPV mining, which will threaten the system security \cite{mirkin2020bdos}. 
	In TIPS, we also propagate the block header first but for different purposes. We embedded the ``signal'' in the block header to avoid transaction inclusion collision and boost the system performance. We also show that TIPS can maintain system security in long term.
	
	\subsection{Bloom Filter in Blockchain}
	
	Bloom filter and its variant have been previously adopted in the blockchain system as auxiliaries, especially in log check-up. \cite{wang2020abacus,ma2019blockchain}. In Bitcoin \cite{nakamoto2019bitcoin}, the SPV node can help to limit the amount of transaction data they receive from full nodes to those transactions that affect their wallet while maintaining privacy. In Ethereum \cite{wood2014ethereum}, the Bloom filter of the receipt logs can help nodes to access log data efficiently and securely. Besides, Bloom filter and its variant like invertible bloom lookup table \cite{O1_propagation} can help to compress the block size. 
	To compress the block in blockchain system, the false positive probability of the Bloom filter in \cite{O1_propagation,ozisik2019graphene} should be small enough, which will lead to a large size of Bloom filter. For example, Graphene's block announcements are $1/10$ the size of the whole blocks \cite{ozisik2019graphene}.
	
	Different from previous work, the Bloom filter plays a crucial role in the protocol, serving as a signal in TIPS. Given the compact size of the Bloom filter-based signal, we can broadcast the necessary information to the whole network in a much shorter time. As mentioned in Section \ref{sec:bloom}, the size of the signal in our protocol can be smaller than $1/100$ the size of the whole block. Furthermore, the mitigation of block propagation mentioned in Section \ref{sec:system_mode:block_propagation} also helps to broadcast the signal faster.
	
	\subsection{Block Transmission Optimization}
	
	TIPS improves the system performance through drastically shortening the effective network delay with the signaling approach. There are several works focusing on optimizing the block transmission delay, which also helps to shorten the network delay. In \cite{bi2018accelerated}, the authors propose an accelerated method for block propagation by selecting proper neighbors. \cite{kan2018boost} boosts blockchain broadcast propagation with tree routing. Coded design \cite{zhang2021speeding} and compacting technology \cite{ozisik2019graphene} are also used to speed up the block propagation.
	In fact, TIPS can be used along with these approaches to improve the system performance of the DAG-based blockchain.

	\section{Conclusion}\label{sec:conclusion}
	
	In this paper, we proposed a novel Transaction Inclusion Protocol with Signaling, TIPS, which can explicitly resolve the dilemmas in DAG-based blockchain and achieve near-optimal performance while maintaining system security. Both the theoretical analysis and the experimental results significantly demonstrate the high efficiency of TIPS.
	
	There are several interesting directions to explore in the future, such as how to simplify the ``signal'' and make the ``signal'' broadcast to the whole network faster. Besides, it is interesting to analyze the dynamic game between different miners with repeated interaction in the long run considering the tit-for-tat property. \ADD{Moreover, it is meaningful to provide theoretical analysis for the challenging scenario where the network is asynchronous and the miners are heterogeneous.}

	\bibliographystyle{IEEEtran}
	\bibliography{IEEEabrv,reference}
	
	
	\ifx\hiddenAppendix\undefine
	
	\appendices
	
%
	
	\section{Proof of Lemma \ref{lemma:reward}}\label{proof:lemma:reward}
	
	\begin{proof}
		Since the block generation process follows the Poisson process with rate $\lambda$, the probability of generating  $k$ more blocks during the block propagation duration is $\frac{(\lambda \Delta)^k}{k!} e^{-\lambda \Delta}$. Because  other miners include transaction $i$ in their block with a marginal probability $p_i$, the probability that there are $\iota$ more miners including the transaction $i$ is $\binom{k}{\iota} p_i^\iota (1 - p_i)^{k-\iota}$. Therefore, we have
		\begin{footnotesize}
			\begin{equation}\nonumber
				\begin{aligned}
					&r(p_i | \Delta) =   \sum_{k=0}^{\infty} \left( \frac{(\lambda \Delta)^k}{k!} e^{-\lambda \Delta} \right) \sum_{\iota = 0}^{k} \binom{k}{\iota} p_i^\iota (1 - p_i)^{k-\iota} \frac{1}{\iota + 1} \\ 
					&=  \sum_{k=0}^{\infty} \left( \frac{(\lambda \Delta)^k}{k!} e^{-\lambda \Delta} \right) \frac{f_i}{p_i(k +1)} \!\! \sum_{\iota = 0}^{k} \!\! \frac{(k+1)!}{(\iota+1)! (k-\iota)!} p_i^{\iota+1} (1 - p_i)^{k-\iota}  \\
					&=   \sum_{k=0}^{\infty} \left( \frac{(\lambda \Delta)^k}{k!} e^{-\lambda \Delta} \right) \frac{f_i}{p_i(k +1)} \left( 1 - (1 - p_i)^{k+1} \right) \\
					&=  \frac{1}{p_i} \sum_{k=0}^{\infty} \left( \frac{(\lambda\Delta)^k}{(k+1)!} e^{-\lambda \Delta} - \frac{(\lambda \Delta)^k (1-p_i)^{k+1} }{(k+1)!} e^{-\lambda \Delta} \right) \\
					&=  \frac{1}{p_i} \left( \frac{1 - e^{-\lambda \Delta}}{\lambda \Delta} -  \frac{e^{-\lambda \Delta p_i} - e^{-\lambda \Delta}}{\lambda \Delta} \right) = \frac{\left(1 - e^{-\lambda \Delta p_i}\right) }{\lambda \Delta p_i}.
				\end{aligned}
			\end{equation}
		\end{footnotesize}
		Then given that other miners include transaction $i$ in their blocks with the probability $p_i'$, the miner's revenue on one block with strategy $\textbf{p}$ is
		$
		R(\textbf{p} | \textbf{p}^*) = \sum_{i=1}^{m} p_i f_i r(p_i^*).
		$
		Therefore, when all the miners adopt the symmetric transaction inclusion strategy $\textbf{p}$, we know that the miners' revenue is 
		$
		R(\textbf{p}) = R(\textbf{p} | \textbf{p}) = \sum_{i=1}^{m} p_i f_i r(p_i|\Delta).
		$
		The proof is thus completed.
	\end{proof}

	\section{Proof of Theorem \ref{th:approx_rand}}\label{proof:th:approx_rand}
	
	\begin{proof}
		Based on Lemma \ref{lemma:reward}, the expectation of total transaction fee reward of a miner who adopts inclusion strategy $\textbf{p}$ given that other miners adopt the strategy $\textbf{p}^{\rm rand}$ is
		\begin{equation}\nonumber
			R(\textbf{p}|\textbf{p}^{\text{rand}}) = \sum_{i=1}^{m} p_i f_i r, \text{ where } r = \frac{1 - e^{-\lambda \Delta \frac{n}{m}}}{\lambda \Delta \frac{n}{m}}.
		\end{equation} 
		Then we have
		\begin{small}
			\begin{equation}\nonumber
				\begin{aligned}
					&\max_{\textbf{p} \in \mathbb{P}} R(\textbf{p}|\textbf{p}^{\text{rand}}) - R(\textbf{p}^{\text{rand}}) = R(\textbf{p}_{\text{top}}|\textbf{p}^{\text{rand}}) - R(\textbf{p}^{\text{rand}}) \\
					=& r \sum_{i=1}^{n} f_i - \frac{n}{m} r \sum_{i=1}^{m} f_i = \frac{1 - e^{-\lambda \Delta \frac{n}{m}}}{\lambda \Delta \frac{n}{m}} \left( \frac{1}{n} \sum_{i=1}^{n} f_i - \frac{1}{m} \sum_{i=1}^{m} f_i   \right).
				\end{aligned}
			\end{equation}
		\end{small}
		Besides, we denote $h(\Delta) = \frac{1 - e^{-\lambda \Delta \frac{n}{m}}}{\lambda \Delta \frac{n}{m}}$. Note that $h(\Delta)$ is monotonically decreasing in $\Delta$. Specially, we have that $\lim_{\Delta \rightarrow \infty} h(\Delta) = 0$. Therefore, when $\Delta \rightarrow \infty$, the random strategy is the Nash equilibrium.
		The proof is thus completed.
	\end{proof}
	
	\section{Proof of Theorem \ref{th_tps}}\label{proof:pro_tps}
	
	\begin{proof}
		During the  propagation time $\Delta$, the miner who mines the new block will broadcast it to the network, with other miners keep mining without any notification of the latest block. The probability of other miners to mine $k$ more blocks during the block propagation time is 
		\begin{equation}\label{eq:Poisson}
			P(k, \Delta) = \frac{(\lambda \Delta)^k}{k!} e^{-\lambda \Delta}.
		\end{equation}
		For convenience, we define that
		\begin{equation}\nonumber
			\delta_i = \begin{cases}
				1, & \!\! \text{the } i \text{-th transaction is included in one of the blocks} ,\\
				0, & \!\! \text{otherwise}.
			\end{cases}
		\end{equation}
		Then the probability that the $i$-th transaction in the transaction pool is included by $k+1$ miners in their blocks is
		\begin{equation}\nonumber
			P(\delta_i = 1, k+1) = 1 - \left( 1 - p_i \right)^{k+1}.
		\end{equation}
		Therefore, the expected number of transactions included during the  propagation time with additional $k$ blocks emerging is 
		\begin{equation}\nonumber
			\begin{aligned}
				\mathbb{E}(X, k) &= \mathbb{E} \left(\sum_{i=1}^{m} P (\delta_i, k) \right) = \sum_{i=1}^{m} \left( 1 - \left( 1 - p_i \right)^{k+1} \right).
			\end{aligned}
		\end{equation}
		Therefore, the average block capacity utilization in long term is 
		\begin{equation}\nonumber
			\begin{aligned}
				U(\textbf{p}) &= \frac{\sum_{k=0}^{\infty} \left( \frac{(\lambda \Delta)^k}{k!} e^{-\lambda \Delta} \right) \mathbb{E}(X, k)}{\sum_{k=0}^{\infty} \left( \frac{(\lambda \Delta)^k}{k!} e^{-\lambda \Delta} \right)(k+1)n} \\
				&= \frac{\sum_{k=0}^{\infty} \left( \frac{(\lambda \Delta)^k}{k!} e^{-\lambda \Delta} \right) \sum_{i=1}^{m} \left( 1 - \left( 1 - p_i \right)^{k+1} \right)}{\sum_{k=0}^{\infty} \left( \frac{(\lambda \Delta)^k}{k!} e^{-\lambda \Delta} \right)(k+1)n} \\
				&= \frac{ e^{-\lambda \Delta}  \sum_{i=1}^{m}\sum_{k=0}^{\infty} \left( \frac{ (\lambda \Delta)^k }{k!} - \frac{(1 - p_i) (\lambda \Delta (1-p_i))^{k} }{k!} \right)  }{n(\lambda \Delta + 1)} \\
				&= \frac{ m - \sum_{i=1}^{m} (1 - p_i) e^{- \lambda \Delta p_i} }{n(\lambda \Delta + 1)}.
			\end{aligned}
		\end{equation}
		Thus, the throughput of the DAG-based blockchain is 
		\begin{equation}\nonumber
			\text{TPS}(\textbf{p}) = \lambda n U(\textbf{p}) = \frac{ \lambda \left( m - \sum_{i=1}^{m} (1 - p_i) e^{-\lambda \Delta p_i} \right) }{(\lambda \Delta + 1)}.
		\end{equation}
		The proof is thus completed.
	\end{proof}
	
	\section{Proof of Theorem \ref{th:approximate}}\label{proof:th:approximate}
	
	\begin{proof}
		Based on Lemma 1, the expectation of total transaction fee reward of a miner with  transaction inclusion strategy $\textbf{p}$, given other miners adopt the strategy $\textbf{p}^{\rm top}$, is 
		\begin{equation}\nonumber
			R(\textbf{p}|\textbf{p}^{\rm top}) =  \frac{1 - e^{-\lambda \tau}}{\lambda \tau} \sum_{i=1}^{n} \ p_i f_i + \sum_{i=n+1}^{m} p_i f_i.
		\end{equation}
		If the miner adopts the top $n$ strategy, his expected reward is 
		\begin{equation}\nonumber
			R(\textbf{p}^{\rm top}) = R(\textbf{p}^{\rm top}|\textbf{p}^{\rm top}) =  \frac{1 - e^{-\lambda \tau}}{ \lambda \tau }\sum_{i=1}^{n} f_i.
		\end{equation}
		Therefore, the extra reward that miner can obtain from deviation is
		\begin{small}
			\begin{equation}\nonumber
				\begin{aligned}
					\max_{\textbf{p} \in \mathbb{P}} R(\textbf{p}|\textbf{p}^{\rm top}) - R(\textbf{p}^{\rm top}) &\leq \max \left\{ nf_{n+1} - n \frac{1-e^{-\lambda \tau}}{\lambda \tau} f_n, 0 \right\} \\
					&\leq \max \left\{ n \left(1 - \frac{1 - e^{-\lambda \tau}}{\lambda \tau}  \right) f_n, 0 \right\}.
				\end{aligned}
			\end{equation}
		\end{small}
		We denote that $g(\tau) = 1 - \frac{1 - e^{-\lambda \tau}}{\lambda \tau}$. Note that $g(\tau)$ is monotonically increasing in $\tau$. Thus we have $\lim_{\tau \rightarrow 0} g(\tau) = 0$. Therefore, when $\tau \rightarrow 0$, the top $n$ strategy is the Nash equilibrium.
		The proof is thus completed.
	\end{proof}
	
	\section{Proof of Theorem \ref{th:NE_top_n}}\label{proof:th:NE_top_n}
	
	\begin{proof}
		To prove that the top $n$ strategy is the equilibrium strategy, we only need to prove that given that other miners adopt the top $n$ strategy, the best response of the miner is exactly the top $n$ strategy. Besides, when fixing other miners' strategies, there exists a pure strategy that can be the best response. 
		Without loss of generality, we consider the scenario where one miner deviates from the top $n$ strategy, and selects the transaction set $\mathbb{B}$, while other miners adopt the top $n$ strategy and select the transaction set $\mathbb{A} = \{1,2,\ldots,n\}$. Then we have $|\mathbb{A}| = |\mathbb{B}| = n$.
		Besides, we denote that $n_1 = | \mathbb{A} \cap \mathbb{B} |$, and $n_2 = \left| \mathbb{A} / \left( \mathbb{A} \cap \mathbb{B} \right) \right|$. Thus, we have $n_1 + n_2 = n$. Then the expected total transaction fee reward for the miner who deviates from the top $n$ strategy and includes the transaction set $\mathbb{B}$ is 
		\begin{small}
			\begin{equation}\nonumber
				\begin{aligned}
					\text{Reward}(\mathbb{B}) &= \text{Tx}(\mathbb{B} / ( \mathbb{B} \cap \mathbb{A}) ) + \sum_{k=0}^{\infty} \left( \frac{(\lambda \tau)^k}{k!} e^{-\lambda \tau} \right) \frac{\text{Tx}(\mathbb{B} \cap \mathbb{A} )}{ k+1 } \\
					&= \text{Tx}(\mathbb{B} / (\mathbb{B} \cap \mathbb{A}) ) + \text{Tx}(\mathbb{B} \cap \mathbb{A} ) \frac{ 1 - e^{-\lambda \tau} }{ \lambda \tau } \\
					&\leq \sum_{i=1}^{n_2} f_{n + i} + \frac{ 1 - e^{-\lambda \tau} }{ \lambda \tau } \sum_{i=1}^{n_1} f_{i},
				\end{aligned}
			\end{equation}
		\end{small}
		and the equality holds when $\mathbb{S} \cap \mathbb{P} = \{ 1, 2, \ldots, n_1 \}$ and $\mathbb{S} / (\mathbb{S} \cap \mathbb{P}) = \{n+1, n+2, \ldots, n+n_2\}$.
		
		Since $\varphi(x)$ is monotonically decreasing, when condition (\ref{eq:NE_condiction}) holds, we have
		\begin{equation}\nonumber
			\frac{1 - e^{-\lambda \tau}}{\lambda \tau} \geq \frac{f_{n+1}}{f_n}.
		\end{equation}
		Besides, since $f_{i+1} \leq f_{i}, \forall i \in \{1, 2, \ldots, m-1\}$, we have
		\begin{equation}\nonumber
			\max_{n_1 = 0, 1, 2, \ldots, n} \frac{ \sum_{i=1}^{n - n_1} f_{n+i} }{ \sum_{i=1}^{n - n_1} f_{n_1 + i} } = \frac{f_{n+1}}{f_n}.
		\end{equation}
		Therefore, when condition (\ref{eq:NE_condiction}) holds, we have
		\begin{equation}\nonumber
			\sum_{i=1}^{n_2} f_{n + i} \leq \frac{ 1 - e^{-\lambda \tau} }{ \lambda \tau } \sum_{i=1}^{n_2} f_{n_1 + i},
		\end{equation}
		which implies that 
		\begin{equation}\nonumber
			\sum_{i=1}^{n_2} f_{n + i} + \frac{ 1 - e^{-\lambda \tau} }{ \lambda \tau } \sum_{i=1}^{n_1} f_{i} \leq \frac{ 1 - e^{-\lambda \tau} }{ \lambda \tau } \sum_{i=1}^{n} f_{i}.
		\end{equation}
		Therefore, we have $\text{Reward}(\mathbb{B}) \leq \text{Reward}(\mathbb{A})$, which implies that $R(\textbf{p}^{\rm top}) \geq \max_{\textbf{p} \in \mathbb{P}} R(\textbf{p} | \textbf{p}^{\rm top})$.
		Thus, the top $n$ strategy is a Nash equilibrium under the condition (\ref{eq:NE_condiction}). Furthermore, we can find that the top $n$ strategy strictly dominates other strategies. Therefore, it is the unique Nash equilibrium. 
	\end{proof}

	\section{Proof of Theorem \ref{th:efficiency:tps}}\label{proof:th:efficiency:tps}

	\begin{proof}
		For convenience, we denote $\textbf{p}^*$ as the equilibrium strategy in the transaction inclusion game. Since the utilization of blocks could not greater than 1, i.e., $U_{\text{optimal}} \leq 1$, we have 
		\begin{equation}\nonumber
			\text{TPS}_{\text{optimal}} = \lambda n U_{\text{optimal}} \leq \lambda n .
		\end{equation}
		Besides, the TPS of the equilibrium strategy is greater or equal to the TPS of the top $n$ strategy, $\textbf{p}_{\rm top}$, then we have
		\begin{equation}\nonumber
			\text{TPS}({\textbf{p}^*}) \geq \text{TPS}(\textbf{p}^{\text{top}}) = \frac{\lambda n}{\lambda \tau + 1} .
		\end{equation}
		Therefore, we have
		\begin{equation}\nonumber
			\text{Efficiency}(\text{TPS}) = \frac{\text{TPS}({\textbf{p}^*})}{\text{TPS}_{\text{optimal}}} \geq \frac{\frac{\lambda n}{\lambda \tau + 1}}{\lambda n} = \frac{1}{\lambda \tau + 1}.
		\end{equation}
		The proof is thus completed.
	\end{proof}	
	
	\section{Proof of Lemma \ref{lemma:limit_tps}}\label{proof:lemma:limit_tps}
	
	\begin{proof}
		For the top $n$ transaction inclusion strategy, we have that $p_1 = p_2 = \cdots = p_n = 1$ and $p_{n+1} = p_{n+2} = \cdots = p_{m} = 0$. According to Theorem \ref{th_tps}, we have
		\begin{equation}\nonumber
			\begin{aligned}
				&\lim_{n \rightarrow \infty} TPS(\textbf{p}^{\text{top}}, n) \\
				=& \lim_{n \rightarrow \infty}  \frac{ \lambda \left( m - \sum_{i=1}^{m} \lambda(1 - p_i) e^{-\lambda \Delta(n) p_i}  \right) }{(\lambda \Delta(n) + 1)} \\
				=& \lim_{n \rightarrow \infty}   { \lambda \left( m - \sum_{i=1}^{m} \lambda(1 - p_i) e^{-\lambda \Delta(n) p_i}  \right) }/{(\lambda \Delta(n) + 1)} \\
				=& \lim_{n \rightarrow \infty}    \frac{\lambda}{{(\lambda \Delta(n) + 1)}} \left( m - \sum_{i=1}^{n} \lambda(1 - 1) e^{-\lambda \Delta(n) 1} \right. \\
				&\left. \quad -  \sum_{i=n+1}^{m} \lambda(1 - 0) e^{-\lambda \Delta(n) 0} \right)  \\
				=&\lim_{n \rightarrow \infty}  \frac{\lambda n}{\lambda \Delta(n) + 1} \\
				=& \frac{1}{\frac{\mathrm{d}\Delta(n)}{\mathrm{d}n}} \quad\quad \text{(Using the L'Hospital's rule)}
			\end{aligned}
		\end{equation}
		The proof is thus completed.
	\end{proof}

	
	\section{Proof of Theorem \ref{th:efficiency:fsr}}\label{proof:th:efficiency:fsr}
	
	Before proving Theorem \ref{th:efficiency:fsr}, we first introduce a new metric ``Fee Service Rate'' (FSR), and then show the relation between the miners' revenue and FSR, finally we can prove Theorem \ref{th:efficiency:fsr} through the analysis of FSR.
	
	Fee service rate (FSR) is defined as the total transaction fee that the blockchain system processes per second. A high FSR implies a high system profit and a good market efficiency in the blockchain system. 
	
	\begin{prop}\label{prop:fsr}
		The fee service rate of the DAG-based blockchain given the transaction inclusion strategy being $\textbf{p}$ and the network propagation delay of a block being $\Delta$ is 
		\begin{equation}
			\text{FSR}(\textbf{p}) = \frac{1}{\Delta} \sum_{i=1}^{m} f_i \left(1 - e^{-\lambda \Delta p_i}  \right),
		\end{equation}
		where $\lambda$ is the block generation rate in the blockchain system.
	\end{prop}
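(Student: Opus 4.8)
The plan is to compute, separately for each transaction $i$, the rate at which the system records (``processes'') its fee $f_i$, and then sum over $i$. The natural accounting unit is a propagation window of length $\Delta$: within such a window the concurrently mined blocks do not yet see one another, so even if several of them carry transaction $i$, the system only processes the fee $f_i$ once. Partitioning the time axis into consecutive windows of length $\Delta$ therefore gives $1/\Delta$ windows per second, and in each window transaction $i$ contributes $f_i$ exactly when at least one block in that window includes it.

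First I would exploit the Poisson structure, as in the proofs of Lemma~\ref{lemma:reward} and Theorem~\ref{th_tps}. Since blocks are generated by a Poisson process of rate $\lambda$ and each block independently includes transaction $i$ with probability $p_i$ under the symmetric strategy $\textbf{p}$, the blocks carrying transaction $i$ form a thinned Poisson process of rate $\lambda p_i$. Hence the number of blocks carrying transaction $i$ inside a window of length $\Delta$ is Poisson with mean $\lambda \Delta p_i$, and the probability that such a window contains at least one of them is $1-e^{-\lambda \Delta p_i}$. The expected fee the system processes for transaction $i$ per window is thus $f_i\bigl(1-e^{-\lambda \Delta p_i}\bigr)$.

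Next I would assemble the pieces: summing over the $m$ transactions and multiplying by the window rate $1/\Delta$ yields $\mathrm{FSR}(\textbf{p}) = \frac{1}{\Delta}\sum_{i=1}^{m} f_i\bigl(1-e^{-\lambda \Delta p_i}\bigr)$, the claimed expression. As an independent check I would verify that this equals $\lambda R(\textbf{p})$: substituting $r(p_i \mid \Delta) = \frac{1-e^{-\lambda \Delta p_i}}{\lambda \Delta p_i}$ from Lemma~\ref{lemma:reward} into $\lambda \sum_{i=1}^{m} p_i f_i\, r(p_i \mid \Delta)$ cancels the factor $\lambda p_i$ and reproduces the same formula. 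This cross-check is reassuring, since the total fee the system processes per second should coincide with the aggregate fee awarded to miners per second, namely the per-block revenue $R(\textbf{p})$ times the block rate $\lambda$.

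The main obstacle I expect is conceptual rather than computational: justifying the ``count once per window'' bookkeeping together with the choice of window length $\Delta$. I would need to argue that collisions occur precisely among blocks mined within one propagation delay of each other (the same concurrency assumption underlying Lemma~\ref{lemma:reward} and Theorem~\ref{th_tps}), so that treating each length-$\Delta$ window as a single processing opportunity for a given transaction is the correct accounting, and that the thinning and cross-block independence of inclusions are legitimate under homogeneous miners sharing a saturated transaction pool. Once that modeling is pinned down, the Poisson ``at least one event'' computation and the per-second normalization are routine.
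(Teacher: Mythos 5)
Your proposal is correct and follows essentially the same route as the paper: the paper likewise accounts per propagation window of length $\Delta$, conditions on the Poisson number $k$ of concurrent blocks, uses per-block independent inclusion to get the at-least-one probability $\mathbb{E}\bigl[1-(1-p_i)^k\bigr]=1-e^{-\lambda\Delta p_i}$, and normalizes by $1/\Delta$. Your only deviation is invoking Poisson thinning to obtain $1-e^{-\lambda\Delta p_i}$ directly instead of summing the series explicitly, and your cross-check $\mathrm{FSR}(\textbf{p})=\lambda R(\textbf{p})$ is exactly the relation the paper records immediately after the proposition.
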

	
	\begin{proof}
		The expectation of the total transaction fees of the transactions included in blocks on the condition that there are $k$ blocks emerging during the network propagation duration is 
		\begin{equation}\nonumber
			\begin{aligned}
				\mathbb{E}(Y, k) &= \mathbb{E} \left(\sum_{i=1}^{m} P (\delta_i, k) f_i \right) = \sum_{i=1}^{m} \left( 1 - \left( 1 - p_i \right)^{k} \right) f_i.
			\end{aligned}
		\end{equation}
		Since the block generation process follows the Poisson process with rate $\lambda$, the probability that there are $k$ blocks emerging during the network propagation duration is $\frac{(\lambda \Delta)^k}{k!} e^{-\lambda \Delta}$.
		Therefore, the fee service rate (FSR) is 
		\begin{equation}\nonumber
			\begin{aligned}
				\text{FSR}(\textbf{p}) &= \frac{1}{\Delta} \sum_{k=0}^{\infty} \left( \frac{(\lambda \Delta)^k}{k!} e^{-\lambda \Delta} \right) \mathbb{E}(Y, k) \\
				& =  \frac{e^{-\lambda \Delta}}{\Delta}  \sum_{i=1}^{m}\sum_{k=0}^{\infty} \left( \frac{ (\lambda \Delta)^k }{k!} - \frac{ (\lambda \Delta (1-p_i))^{k} }{k!} \right)f_i  \\
				&= \frac{e^{-\lambda \Delta}}{\Delta}  \sum_{i=1}^{m} f_i \left( e^{\lambda \Delta} - e^{\lambda \Delta (1 - p_i)} \right) \\
				&= \frac{1}{\Delta} \sum_{i=1}^{m} f_i \left(1 - e^{-\lambda \Delta p_i}  \right).
			\end{aligned}
		\end{equation}
		The proof is thus completed.
	\end{proof}
	
	Combining Proposition \ref{prop:fsr} and Lemma \ref{lemma:reward}, we can find that the miners' revenue is proportional to the FSR, i.e., $R(\textbf{p}) = \text{FSR}(\textbf{p}) / \lambda$, where $\lambda$ is the block generation rate. 
	Therefore, to prove Theorem \ref{th:efficiency:fsr}, we only need to prove that 
	\begin{equation}\nonumber
		\begin{aligned}
			&  \text{Efficiency}(FSR) \geq  \frac{(1 - e^{-\lambda \tau})}{\left( \lfloor \frac{m}{n} \rfloor - \sum_{k=0}^{\lfloor \frac{m}{n} \rfloor} \frac{(\lambda \tau)^k}{k!} e^{-\lambda \tau} \right)} .
		\end{aligned}
	\end{equation}
	where
	$
	\rm Efficiency(FSR) = \frac{\rm FSR \ of \ Equilibrium}{\rm FSR \  of \ optimal \ strategy}.
	$
	The proof goes as follows.
	
	\begin{proof}
		For convenience, we denote $\textbf{p}^*$ as the equilibrium strategy in the transaction inclusion game. We consider the total transaction fees of transactions included in blocks during the network propagation delay of signal $\tau$. The highest possible FSR can be achieved when there is no transaction inclusion collision and the included transactions have the highest transaction fees. Therefore, we have
		\begin{small}
			\begin{equation}\nonumber
				\begin{aligned}
					&\text{FSR}_{\text{optimal}} \leq \frac{1}{\tau} \sum_{k=0}^{\infty} \left( \frac{(\lambda \tau)^k}{k!} e^{-\lambda \tau} \sum_{i=1}^{\max\{kn, m\} } f_i \right) \\ 
					&= \frac{1}{\tau} \left( \left(\sum_{i=0}^{n} f_i\right) \sum_{k=0}^{\infty}\frac{(\lambda \tau)^k}{k!} e^{-\lambda \tau} + \left(\sum_{i=n+1}^{2n} f_i\right) \sum_{k=1}^{\infty}\frac{(\lambda \tau)^k}{k!} e^{-\lambda \tau}  \right. \\
					&\left. + \cdots + \left(\sum_{i=n\lfloor \frac{m}{n} \rfloor  }^{m} f_i\right) \sum_{k=\lfloor \frac{m}{n} \rfloor}^{\infty}\frac{(\lambda \tau)^k}{k!} e^{-\lambda \tau} \right) \\
					&\leq \frac{1}{\tau} \sum_{i=0}^{n} f_i \left( \lfloor \frac{m}{n} \rfloor - \sum_{k=0}^{\lfloor \frac{m}{n} \rfloor} \frac{(\lambda \tau)^k}{k!} e^{-\lambda\tau} \right).
				\end{aligned}
			\end{equation}
		\end{small}
		Besides, since $\textbf{p}^*$ is the equilibrium strategy, we have
		\begin{footnotesize}
			\begin{equation}\nonumber
				R(\textbf{p}^* | \textbf{p}^*) = \sum_{i=1}^{m} p_i^* f_i r(p_i^*) \geq R(\textbf{p}' | \textbf{p}^*) = \sum_{i=1}^{m} p'_i f_i r(p_i^*), \forall \textbf{p}' \in \mathbb{P}.
			\end{equation}
		\end{footnotesize}
		Without loss of generality, we have
		\begin{equation}\nonumber
			R(\textbf{p}^* | \textbf{p}^*) = \sum_{i=1}^{m} p_i^* f_i r(p_i^*) \geq R(\textbf{p}^{\text{top}} | \textbf{p}^*) = \sum_{i=1}^{n}  f_i r(p_i^*).
		\end{equation}
		Since $r(p_i^*)$ is monotonically decreasing, we have that $r(p_i) \geq r(1)$. Therefore, we have
		\begin{equation}\nonumber
			R(\textbf{p}^{\text{top}} | \textbf{p}^*) = \sum_{i=1}^{n}  f_i r(p_i^*) \geq R(\textbf{p}^{\text{top}} | \textbf{p}_{\text{top}}) = \sum_{i=1}^{n}  f_i r(1).
		\end{equation}
		Thus we have that 
		\begin{equation}\nonumber
			R(\textbf{p}^*) \geq R(\textbf{p}^{\text{top}}) =  \sum_{i=1}^{n}  f_i r(1) = \frac{(1 - e^{-\lambda\tau})}{\lambda\tau} \sum_{i=1}^{n}  f_i .
		\end{equation}
		Besides, we have proved that 
		\begin{equation}\nonumber
			\text{FSR}(\textbf{p}') = \lambda R(\textbf{p}'), \quad \forall \textbf{p}' \in \mathbb{P}.
		\end{equation}
		Thus, we have
		\begin{equation}\nonumber
			\text{FSR}(\textbf{p}^*) \geq \lambda R(\textbf{p}^{\text{top}}) = \frac{(1 - e^{-\lambda\tau})}{\tau} \sum_{i=1}^{n} f_i .
		\end{equation}
		Therefore, we have   
		\begin{equation}\nonumber
			\text{Efficiency}(\text{FSR}) = \frac{\text{FSR}(\textbf{p}^*)}{\text{FSR}_{\text{optimal}}} \geq \frac{(1 - e^{-\lambda\tau})}{\left( \lfloor \frac{m}{n} \rfloor - \sum_{k=0}^{\lfloor \frac{m}{n} \rfloor} \frac{(\lambda \tau)^k}{k!} e^{-\lambda\tau} \right)} .
		\end{equation}
		The proof is thus completed.
	\end{proof}

	\else
	\fi

	\vfill
	
\end{document}